\documentclass[11pt]{article}
\usepackage{fullpage}

\usepackage{bbm}
\usepackage{amssymb}
\usepackage{amsmath}
\usepackage{amsthm}
\usepackage{amsfonts}
\usepackage{graphicx}
\usepackage{url}
\usepackage{hyperref}
\usepackage{color}
\usepackage{natbib}

\newtheorem{theorem}{Theorem}[section]
\newtheorem{mechanism}[theorem]{Mechanism}
\newtheorem{definition}[theorem]{Definition}
\newtheorem{proposition}[theorem]{Proposition}
\newtheorem{lemma}[theorem]{Lemma}

\newtheorem{corollary}[theorem]{Corollary}
\newtheorem{example}[theorem]{Example}
\newtheorem{construction}[theorem]{Construction}

\newcommand{\cI}{\mathcal{I}}
\newcommand{\cG}{\mathcal{G}}
\newcommand{\cM}{\mathcal{M}}
\newcommand{\reals}{\mathbb{R}}
\newcommand{\Pb}{\mathbb{P}}
\newcommand{\E}{\mathbb{E}}

\newcommand{\Var}{\mathbb{V}}

\newcommand{\AutoAdjust}[3]{\mathchoice{ \left #1 #2  \right #3}{#1 #2 #3}{#1 #2 #3}{#1 #2 #3} }
\newcommand{\Xcomment}[1]{{}}

\newcommand{\InBrackets}[1]{\AutoAdjust{[}{#1}{]}}
\newcommand{\Ex}[2][]{\operatorname{\mathbf E}_{#1}\InBrackets{#2}}

\newcommand{\given}{\;\mid\;}

\newcommand{\itemtype}{i}
	\newcommand{\Alphabet}{\Sigma}
	\newcommand{\sig}{s}


	\newcommand{\partsig}{\bar x}



\title{Third-Party Data Providers Ruin Simple Mechanisms}

\author{Yang Cai\thanks{Yale University. Email: yang.cai@yale.edu.}
\and
Federico Echenique\thanks{California Institute of Technology. Email: fede@hss.caltech.edu.}
\and
Hu Fu\thanks{University of British Columbia. Email: hufu@cs.ubc.ca.}
\and
Katrina Ligett\thanks{Hebrew University of Jerusalem. Email: katrina.ligett@mail.huji.ac.il.}
\and 
Adam Wierman\thanks{California Institute of Technology. Email: adamw@caltech.edu.}
\and
Juba Ziani\thanks{University of Pennsylvania. Email: jziani@seas.upenn.edu.}
}

\begin{document}

\maketitle

\begin{abstract}
Motivated by the growing prominence of third-party data providers in online marketplaces, this paper studies the impact of the presence of third-party data providers on mechanism design.  When no data provider is present, it has been shown that simple mechanisms are ``good enough'' -- they can achieve a constant fraction of the revenue of optimal mechanisms. The results in this paper demonstrate that this is no longer true in the presence of a third-party data provider who can provide the bidder with a signal that is correlated with the item type. Specifically, even with a single seller, a single bidder, and a single item of uncertain type for sale, the strategies of pricing each item-type separately (the analog of item pricing for multi-item auctions) and bundling all item-types under a single price (the analog of grand bundling) can both simultaneously be a logarithmic factor worse than the optimal revenue.  Further, in the presence of a data provider, item-type partitioning mechanisms---a more general class of mechanisms which divide item-types into disjoint groups and offer prices for each group---still cannot achieve within a $\log \log$ factor of the optimal revenue.  Thus, our results highlight that the presence of a data-provider forces the use of more complicated mechanisms in order to achieve a constant fraction of the optimal revenue.  
\end{abstract}

\section{Introduction}

Information asymmetries are rampant in markets from ad auctions to art auctions, from acquiring a summer home to acquiring a startup.  Naturally, whenever significant information asymmetries occur, agents have incentives to acquire information through outside channels. As a result, there is a proliferation of companies that seek to collect information that can be sold to participants in auctions with information asymmetries.  Online advertising provides an extreme example. By tracking online behavior, {\em data providers} are able to sell valuable information about internet users (whose attention is the good for sale) to bidders in online advertising auctions. An FTC report \citeyear{FTC} details the scale and prevalence of such data providers --- generating \$426 million in annual revenue in 2012 and growing considerably in the years since. 

Some of the most elegant results in mechanism design focus on providing simple characterizations of (near-)optimal auctions with information asymmetries.
A particularly beautiful example is \citet{Myerson81}'s characterization of optimal single-item auctions. In the special case of one bidder, \citeauthor{Myerson81} characterizes a monopolist's optimal pricing mechanism for a buyer with a private value drawn from a known distribution.  Settings where the item for sale is endowed with a type,
the realization of which affects the buyer's value but is known only to the seller, are particularly well-motivated generalizations of the setting considered by Myerson, and have received substantial attention in the economics literature \citep[e.g.,][]{lewis94,G04,FJM+12, EFG+14, Smolin17}.  Our paper extends this line of inquiry to settings where third-party information about the item  is revealed to the bidders, outside of the control of the seller. 

Specifically, the goal of this paper is to investigate \emph{the impact of third-party data providers on the revenue of simple mechanisms}.  To do this, we consider a simple market---a single seller, a single bidder, and a single good---and a particular form of information asymmetry---the seller knows the type of the good she is selling, but the bidder has only partial information on the item type; the bidder knows his valuation for each of the $n$ possible item types, but the seller knows only distributional information about the valuations. The key to the model is that, in addition to a prior over the item type, the bidder obtains a signal about the item type from a third party data provider and, while the seller can anticipate the signaling scheme used by the data provider, the seller does not know the realization of signals. This captures, for example, a simple model of an ad auction where a mechanism designer sells an ad auction slot to an advertiser, who has incomplete information about the users targeted by the slots.  The advertiser can get additional, third-party information about the target user(s) via data providers that track, for example, users' cookies.

Our model, though stylized, is already general enough to expose the difficulties created by third-party data providers. In particular, our main results show that \emph{simple mechanisms cannot provide revenue within a constant factor of the revenue provided by an optimal mechanism}.  This result is perhaps surprising in the context of an elegant recent paper, where \citet{DPT16} study optimal auctions in the {\em absence} of a data provider. \citet{DPT16}~study the design of simple mechanisms in a setting where the only uncertainty about the item type is that it is drawn from a common prior.  In this context, the question is whether it is valuable for the \emph{seller} to share information with the bidder about the item type, or whether mechanisms that do not reveal information can be approximately optimal. 
Interestingly, \citet{DPT16} are able to characterize optimal auctions for this setting. Their insights show a direct correspondence between mechanisms for selling a single item of uncertain type and multi-item auctions; in particular, this correspondence implies that the seller does not need to reveal any information about the item type to the bidder in order to maximize his revenue. Further combining this observation with the work of~\citet{BILW14} allows them to observe that the better mechanism of two simple approaches---setting a fixed price for the item (the parallel notion of grand bundling, which we term ``item-type bundling''), or pricing each item type separately (the parallel notion of item pricing, which we term ``item-type pricing'')---is guaranteed to yield a constant (in fact, $1/6$) fraction of the revenue of the optimal mechanism.  Thus, in the case where there is no third-party data provider, simple mechanisms are sufficient. 

Our results show, in contrast, that the presence of a third-party data provider, who reveals information outside of the control of the seller, complicates the mechanism design task dramatically.  We first consider revenue-optimal mechanisms. While \citet{DPT16}'s characterization of optimal auctions extends naturally to our setting, these optimal mechanisms may be quite complex.  Concretely, our setting satisfies a type of revelation principle (Lemma \ref{LEM:REVPR}) 
stating that optimal mechanisms require only a single round of bidding, followed by a single round of information revelation (full revelation, in fact); however, such a mechanism presents the bidder with a menu of options that includes an option for each possible valuation vector, combined with each possible posterior of the bidder on the item type after receiving the data provider's signal, and requires the seller to condition the price charged on the realization of the item type. 

Not only does the presence of a data provider complicate the design of the optimal mechanism, our main results show that it also impacts the revenue achievable via simple mechanisms. Specifically, in the presence of a data provider, the better of item-type bundling and item-type pricing may achieve only an $\Omega\left(\log n\right)$ factor approximation of the revenue the seller could have achieved had she
offered a richer menu to the bidder (Theorem~\ref{THM:SIMPLE}); this factor is, in fact, tight (Corollary~\ref{cor:itempricing_log}). In particular, a mechanism that divides the item types into disjoint groups and offers a price on each group can outperform both item-type pricing and item-type bundling by a logarithmic factor.  Such mechanisms are known in the multi-item auction literature as partition mechanisms, and are seen as relatively simple mechanisms \citep[see, e.g.,][]{Rubinstein16}.  In our setting, we refer to such mechanisms as \emph{item-type partition mechanisms}.

This separation between the revenue of item-type partitioning and that of item-type pricing and item-type bundling raises a natural question: \emph{if we expand our view of what constitutes ``simple'' mechanisms to include item-type partitioning, which generalizes both item-type pricing and item-type bundling, can we guarantee that simple mechanisms obtain a constant approximation of the optimal revenue in the presence of a data provider?}  

Our main result uses a more intricate argument to show that this is not the case.  We demonstrate that, in the presence of a data provider, optimal mechanisms can outperform the best item-type partition mechanism by an $\Omega(\log \log n)$ factor (Theorem \ref{THM:ITEM-TYPE-PARTITION}). So, in the presence of a data provider, simple mechanisms truly are no longer optimal. Additionally, our result highlights that the presence of third-party information can simultaneously hurt the optimal revenue achievable by the seller (by a $O\left(\log \log n/\log n\right)$ factor, see Theorem \ref{THM:ITEM-TYPE-PARTITION}).

Our results imply that, in settings where bidders have incomplete information (e.g., ad auctions), it is crucial for a seller whose goal is to maximize revenue via a simple mechanism to have a monopoly on the information available about the good for sale.  A seller may lose significant revenue if using a simple design without a monopoly on data, when the number of possible item types grows large. This can be seen, in particular, in our ad auction example. The value an advertiser gets from showing an ad to a certain individual may depend on many attributes such as gender, age, education, but also specific interests the individual may have. Each unique combination of interests defines a type an individual may have, and there may be exponentially many combinations of such interests. A logarithmic fraction (in the number of types) of the optimal revenue becomes a linear fraction of the optimal revenue in the number of possible interests. Such a revenue guarantee becomes trivial as the number of possible interests grows.

Our discussion so far has focused on the seller, and ignored the data provider's incentives. The results described above do not depend on a specific model of the data provider behavior.  However, when interpreting the lower bounds, it is interesting to consider how the data provider may behave.  Two  particular cases of interest are: (i) a \textit{strategic} data provider that seeks to maximize its profit and (ii) an \textit{adversarial} data provider that seeks to minimize the seller's profit.  

We study the case of a \textit{strategic} data provider in Section~\ref{sec:dp-incentive}, and obtain results about the equilibrium outcomes when the seller and the data provider interact strategically. Specifically, we consider a game played between a seller and a data provider. The game has the seller and the data provider each choosing an action simultaneously. The seller proposes a mechanism that the buyer will engage in, a mechanism which depends on the signaling scheme that the seller expects the provider to offer. The data provider chooses a signaling scheme that it offers to the buyer. Both agents, the seller and data provider, seek to maximize profits.  Our results highlight that, regardless of the form of the mechanism used by the seller, the strategic data provider chooses to reveal all the information that is available to him (Lemma~\ref{LEM:FULL_REVELATION_STRATEGIC}). Importantly, all of the constructions used to prove the lower bounds for simple mechanisms apply to the specific case of strategic data providers, and thus the lower bounds discussed above hold in the presence of a strategic data provider. 

Finally, we consider the case of an \emph{adversarial} data provider in Section~\ref{sec:adversarial}. In this case, the data provider seeks to minimize the seller's profits, which could be the goal if the data provider were also running a platform that competed with the seller.  As in the case of the strategic data provider, our lower bounds can be extended to this setting, and Corollary~\ref{cor: revenue_loss_strategic} highlights that an adversarial data provider can force a revenue of at most $O \left(1/\log n\right)$ of the achievable revenue when no data provider is present. Additionally, this setting is of particular interest because it demonstrates behavior that is, perhaps, counter-intuitive.  Specifically, in contrast to the case of a strategic data provider, a data provider that is attempting to negatively impact the revenue of the seller \emph{may not} want to fully reveal his information about the item type to the bidder (Lemma \ref{lem:adversarial_partial_revelation}).  Instead, there may be intermediate signals which, upon revelation, minimize the revenue of the seller. This serves to highlight the complexity of mechanism design in the context of a third-party data provider, motivating the importance of designing mechanisms that have strong lower bounds regardless of the behavior of the data provider.  

To summarize, in this paper we make the following contributions. We propose a simple model of an auction in the presence of a third-party data provider, capturing information asymmetry regarding the type of the item for sale. Within this model, we first (Section~\ref{sec:charac}) provide a characterization of the optimal auction based on that of~\citet{DPT16}, which may require a complex menu of options. Our main results study the potential for simple mechanisms to approximate the revenue of optimal mechanisms. In Section~\ref{sec:example1}, we show that the item-type equivalents of item pricing and grand bundling cannot achieve within an $\Omega \left( \log n \right)$ factor of the revenue achievable by the optimal mechanism, nor even of the best item-type partition mechanism; this bound is tight, as item pricing can be shown to always achieve a $O \left(\log n \right)$ factor of the optimal revenue (see Corollary~\ref{cor:itempricing_log}). Further, in Section~\ref{sec:example2}, we show that there may be an $\Omega \left( \log \log n\right)$ gap between the revenue of the best item-type partitioning and that of the optimal mechanism. These results highlight that the presence of a data provider significantly reduces the ability of simple mechanisms to approximate the revenue of optimal mechanisms, even in the case of a single seller and a single bidder.  Finally, in Section~\ref{sec:dpmodeling}, we turn to understanding the impact of the behavior of the data provider.  We show that our lower bounds also hold for the specific cases of strategic and adversarial data providers.  Additionally, we show a contrast between these two cases: strategic, revenue-maximizing data providers always fully reveal the information available to them, while adversarial data providers may only {\em partially} reveal information. Thus, \emph{partial} revelation may be more damaging to the seller than {\em full} information revelation.  Taken together, the results in this paper highlight that there is significant motivation for a seller, both in terms of design simplicity and revenue, to be a data monopolist.

\paragraph{Related work} There is a large and rich literature on information and signaling in auctions.  One line of research focuses on designing a signaling scheme (on the part of the seller) given a certain auction format such as the second price auction \citep[see, e.g.,][]{lewis94,G04,ES07,EFG+14, MS12, CCD+15, DIOT15, Smolin17}; another line, closer to our setting, studies the design of \emph{both} the auction and the signaling scheme.  In this line of work there is no data provider; bidders have a prior on their valuation for the item, and any signal on this valuation comes from the seller. More closely related to the current paper, \citet{FJM+12} showed that, if the auctioneer commits to a signaling scheme before choosing the form of the auction, full revelation followed by \citeauthor{Myerson81}'s auction for the revealed item type is the optimal design.
However, \citet{DPT16} revealed the subtlety of this order of commitment and showed that, when the design of the auction and that of the signaling scheme are considered together (without having to commit to one before the other), the optimal strategy is to reveal no information at all, and the overall problem is in fact equivalent to the design of a multi-item auction.  In particular, they show that, when the bidders have a publicly known common prior on the type of the item, the optimal revenue for the seller is that of a multi-item auction.

Furthermore, Theorem $2$ of~\citet{DPT16} shows a one-to-one correspondence between types when selling a single item of uncertain type  and items in a classical multi-item auctions.  In particular, item-type pricing, i.e., mechanisms in which the seller first reveals the item type and then charges a take-it-or-leave-it price, is equivalent to selling separately (i.e., item pricing) in the corresponding multi-item auction, and item-type bundling, i.e., mechanisms in which the seller does not reveal any information and offers a single take-it-or-leave-it price, is equivalent to grand bundling in the corresponding multi-item auction.  When there is a single bidder, \citet{DPT16} further combine this correspondence with results of~\citet{BILW14} to show that the better of item-type pricing and item-type bundling gives at least $1/6$ of the optimal revenue.

The results described above highlight the connection between our work and the study of simple mechanisms for multi-item auctions.  \citet{HN12} pioneered this area. They showed that a seller, using item pricing, can extract a $\Omega\left(1/\log^2 n\right)$ fraction of the optimal revenue from an additive bidder whose values for $n$ items are drawn independently, and selling these items as a bundle can achieve a $\Omega\left(1/\log n\right)$-fraction of the optimal revenue if the bidder's values are i.i.d.  \citet{LiY13} improved the approximation ratio for item pricing to $O\left(1/\log n\right)$, which is tight. \citet{BILW14} showed that, surprisingly, the better of selling separately and grand bundling can achieve at least $1/6$ of the optimal revenue. Subsequently there has been a surge of results generalizing the results of \citeauthor{BILW14}~to broader settings~\citep{CaiH13, Yao15,RubinsteinW15,CaiDW16,ChawlaM16,CaiZ17}. At this point, it is known that simple mechanisms such as sequential two-part tariffs can obtain a constant fraction of the optimal revenue for multiple bidders with combinatorial valuations that are, e.g., submodular, XOS~\citep{CaiZ17}. One might hope to extend these simple deterministic mechanisms to settings where the bidder has correlated values over the items; however, this is impossible.  \citet{HN13} showed that even for a single additive bidder, when valuations are interdependent, the ratio between the revenue obtainable by a randomized mechanism and that of the best deterministic mechanism can be unbounded.

\section{Model and preliminaries}
\label{sec:prelim}
We study mechanism design in settings where buyers may not fully understand how they value the item(s) for sale.  Our guiding example is an ad auction, where the buyer is an advertiser, the seller offers a slot for placing an ad, and the item is a specific user who would view the ad-slot that the buyer is bidding on. The buyer may acquire information about the item from a third-party data provider, for instance via cookie matching. Our paper studies the impact of such third-party information on the mechanism design. 

Even in simple settings, revenue-maximizing mechanisms can be complex: they can involve randomization, and typically let buyers choose among one of (sometimes infinitely) many lotteries~\cite{HN13,DDT14}. This complexity is a barrier for practical implementation. The results of~\citet{DPT16} imply that this complexity carries through to the settings considered in this paper, in which buyers may not know the types of the items they bid for: optimal auctions may be complex and hard to implement in practice even when there is only a single buyer and a single item for sale, that can be of two possible types.

In light of the complexity of revenue-maximizing mechanisms, the literature has focused on simple mechanisms, and on whether they can approximate the optimal achievable revenue; see for example the works of~\citet{HN12,LiY13,BILW14}. In our paper, we focus on analyzing the revenue achievable by such simple mechanisms in the presence of a third-party data provider, and we show an impossibility result: simple mechanisms cannot always achieve a constant fraction (in the number of possible item types) of the optimal revenue. This contrasts with the results of~\citet{BILW14} and~\citet{DPT16}, who show that in the absence of data provider, these simple mechanisms give a constant (in the number of item types) approximation to the optimal revenue.

Our message is that something is impossible, so our results become stronger the simpler the context in which they are proven. As such, we focus on a very simple setting:  a single buyer and a single item for sale. Formally, we consider the following setting. There is a single, revenue-maximizing seller selling a single item to a single buyer. The item for sale takes one of $n$ possible types, and the buyer's valuation may depend on the item type. The buyer does not know the type $\itemtype$ of the item, but has a publicly-known prior $\vec{\pi}$ over the item types. We let $\pi(\itemtype)$ denote the prior probability that the item is of type~$\itemtype$. In the case of ad-auctions, the seller is an ad-platform, the buyer is an advertiser, and the item is a user who will be shown an ad. Item types reflect the heterogeneity in the different types of users, and it is clear that the value to an advertiser of an ad depends on the type of user who will be the ad.

The buyer's private value when the good is of type $\itemtype$ is drawn independently from a publicly known distribution $D(\itemtype)$ over the space of non-negative real numbers $\reals^+$. We denote by $V(\itemtype)$ the buyer's valuation for an item of type $\itemtype$, and  by $\vec{V} = \left(V(1), \ldots, V(n) \right)$ the buyer's valuation vector. 

In our setting, there is a third-party data provider who has (potentially imperfect) information on the type of the item, in the form of a random variable~$X$ that can be arbitrarily correlated with the type of the item. This is unlike the setting of~\cite{DPT16} in which only the seller could reveal information about the item type to the bidders.
The joint distribution of $\itemtype$ and $X$ is publicly known, but the realized value of~$X$ is only observed by the data provider. 
	
	The data provider designs a \emph{signaling scheme} in the form of a function $S$ that maps $X$ to $\Delta(\Alphabet)$, the set of distributions over an alphabet~$\Alphabet$.  The data provider is able to commit to such a scheme and, on observing information~$X$, the data provider draws a signal $\sig$ from~$\Alphabet$ according to the distribution~$S(X)$, and sends it to the buyer if the latter purchases from the data provider.\footnote{For most of the paper we make no assumptions about the data provider; however, in Section \ref{sec:dpmodeling}, we take the incentives of the data provider into account and consider two different scenarios: one in which the data provider is strategic and aims to maximize his revenue from selling his signal, and one in which he is adversarial and tries to hurt the seller's revenue.}
	
	After receiving $\sig$, the buyer updates his prior using Bayes' rule. We denote the resulting posterior by $\vec{\pi}_{\sig} \in \Delta([n])$. 
	The buyer aims to maximize his utility given his posterior on the item type; we assume utilities are quasi-linear. Since the realization of $X$ was not visible to the seller, if the buyer purchases from the data provider, the seller would know only the distribution over the buyer's posteriors, conditioning on the item type~$\itemtype$.

	Motivated by the sale of online advertisements, where the sale repeats rapidly with the item type redrawn in each round, we assume that the buyer's decision to purchase from the data provider is made before his value is realized. In this setting, the buyer, seller and data provider act as follows:
\begin{enumerate}
\item The seller commits to a information revelation policy and a mechanism. Simultaneously, the data provider commits to a signaling scheme.
\item The buyer decides whether to enter a contract with the data provider and to purchase his information.
\item All the participants receive their private information: the buyer observes his valuation vector, the data provider observes $X$, and the seller observes the item type. 
\item The buyer sees the signal from the data provider. The seller reveals information to the buyer and runs her mechanism. 
\end{enumerate}
The roles of seller and data provider are asymmetric,  in that the buyer makes the decision of purchasing the data provider's signal before his valuation vector is realized, whereas the purchase decision with the seller is made after the buyer realizes his valuation vector.  In the ad auction setting, this asymmetry captures the practice that data sets are often sold in batch, or as ``right of access,'' whereas ads are sold per impression, often via bidding in an auction for each individual ad.  This asymmetry is even more marked when the buyer is an agency that bids on behalf of many advertisers in individual auctions but buys data in batch to inform such bidding.

\subsection{Simple mechanisms} The goal of this paper is to show that, in the presence of a data provider, no simple mechanism can extract a constant fraction of the optimal revenue.  To provide context for our work, we first summarize results on the existence of simple mechanisms when no data provider is present. 

In the absence of a third-party data provider, in the single seller, single buyer, {\em multi-item} setting, {\em where each item has a single type}, \citet{BILW14} show that, although the optimal mechanism may be complex, a simple mechanism achieves a constant factor of the optimal revenue.  In particular, this mechanism is simply  the better of either item pricing or grand bundling.  This result was originally stated for multi-item auctions, but the results of \citet{DPT16} show that the current setting, with a single item that can take on multiple possible types, in fact reduces to the multi-item auction setting. We provide more details on this reduction in Appendix~\ref{app: reduction}.

These simple mechanisms are important throughout our paper, so we formally define them here, in the context of selling a single item with multiple possible types.

\begin{definition}
\emph{An \textbf{item-type pricing} mechanism first reveals the type $i$ of the item to the buyer, then offers to sell the item to the buyer at some price $P_{\text{it}}(i)$. We  refer to such mechanisms as ``selling the types separately,'' in analogy to the concept of selling separately in  multi-item auctions.}
\end{definition}

\begin{definition}
\emph{An \textbf{item-type bundling} mechanism offers the item for sale at some price $P_{gr}$ without revealing any information about the realized type of the item.} 
\end{definition}

\noindent The following results summarize the power of these simple mechanisms in the single-item, multi-type setting, without a data provider. 

\begin{proposition}[\citet{HN12,LiY13,DPT16}]~\label{prop:log_separate}
In the absence of a data provider, both item-type pricing and item-type bundling yield at least a $\Omega\left(\frac{1}{\log n}\right)$-approximation to the optimal revenue when there is a single seller, a single buyer, a single item for sale, and the buyer has a publicly known prior over the type of the item. 
\end{proposition}

\begin{proposition}[\citet{BILW14,DPT16}]~\label{prop:constant-factor}
In the absence of a data provider, the maximum of item-type pricing and item-type bundling yields at least a $\frac{1}{6}$-approximation to the optimal revenue when there is a single seller, a single buyer, a single item for sale, and the buyer has a publicly known prior over the type of the item. 
\end{proposition}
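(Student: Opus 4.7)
The plan is to derive this proposition directly by combining the two results cited in its attribution, rather than reproving either from scratch. The central observation is Theorem~2 of \citet{DPT16}, which gives a revenue-preserving one-to-one correspondence between selling a single item of uncertain type to a single bidder (with no data provider) and selling $n$ items to a single additive bidder in the classical multi-item auction model. Under this correspondence, item type $i$ becomes item $i$, and the additive bidder's value for item $i$ in the multi-item instance is $\pi(i) \cdot V(i)$. Because the $V(i)$'s are drawn independently from $D(i)$, the induced value distribution over the $n$ items in the multi-item instance is a product distribution, which is exactly the regime in which the results of \citet{BILW14} apply.

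Next I would verify that, under the DPT16 correspondence, item-type pricing maps to selling the items separately (item pricing), and item-type bundling maps to grand bundling in the multi-item instance, with the revenue of each mechanism equal to that of its image. This is already made explicit earlier in the excerpt: revealing the realized type $i$ and charging $P_{\text{it}}(i)$ is a take-it-or-leave-it offer on item $i$ in the multi-item instance at price $P_{\text{it}}(i)$, yielding expected revenue $\sum_i \pi(i) \cdot \Pb[V(i) \ge P_{\text{it}}(i)] \cdot P_{\text{it}}(i)$; and charging $P_{gr}$ without revealing any information corresponds to a bundle price $P_{gr}$, yielding $P_{gr} \cdot \Pb[\sum_i \pi(i) V(i) \ge P_{gr}]$. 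Similarly, the DPT16 equivalence implies that the \emph{optimal} revenue on either side coincides.

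With the correspondence in hand, I would invoke the main theorem of \citet{BILW14}: for a single additive bidder whose values over $n$ items are independently distributed, the maximum of the revenues of item pricing and grand bundling is at least $\frac{1}{6}$ of the revenue of the optimal (possibly randomized) mechanism. Applied to the multi-item instance constructed above and then pulled back through DPT16's mapping, this yields exactly the claimed $\frac{1}{6}$-approximation in the original single-item, multi-type setting.

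The main obstacle I anticipate is not in the BILW14 step, which is a direct quotation, but in being careful with the DPT16 reduction: one must check that the reduction truly preserves revenue for \emph{both} the optimal mechanism and the two simple classes, and that the independence of the $V(i)$'s across types carries over to independence across items in the multi-item instance so that BILW14's hypotheses are met. Once these bookkeeping points are settled, the proposition follows with no additional computation.
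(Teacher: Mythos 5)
Your proposal is correct and follows essentially the same route the paper relies on: it cites Proposition~\ref{prop:constant-factor} as a direct combination of \citet{DPT16}'s revenue-preserving correspondence (under which item-type pricing maps to selling separately and item-type bundling to grand bundling, with independence of the $V(i)$'s giving a product distribution in the multi-item instance) with the $1/6$-approximation of \citet{BILW14}. The paper offers no further proof beyond this citation, so your bookkeeping of the reduction is exactly the intended argument.
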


In addition to the notions of simple mechanisms considered in previous work, in this paper we consider a natural generalization of both item-type bundling and item-type pricing mechanisms. The mechanisms we consider are called item-type partition mechanisms and are simple in the sense that (i) they do not require randomization and (ii) near-optimal such mechanisms can be efficiently computed, as shown by~\citet{Rubinstein16}. The class of partition mechanisms covers many simple mechanisms for single bidder multi-item auctions, such as those considered by~\cite{HN12,LiY13,BILW14,feldman2014combinatorial,bateni15,morgenstern2016learning,rubinstein2018simple,chen2018complexity}.

\begin{definition}\label{def:item-type partitioning} 
\emph{An \textbf{item-type partition} mechanism first partitions the set of item types into non-empty groups $\cG_1$ to $\cG_g$, priced (resp.) $P_1$ to $P_g$. The  mechanism then observes the type $i$ of the item, and offers the item at price $P_r$, where $r$ is uniquely chosen such that $i \in \cG_r$.}
\end{definition}

Note that, after observing the offered price $P_r$, the buyer may infer that the realized item type must belong to group $\cG_r$. Item-type pricing is an instantiation of item-type partitioning where the partition contains a separate group for each type; item-type bundling corresponds to item-type partitioning using the trivial partition. Item-type partitioning is, however, significantly more powerful than these other simple mechanisms, as it allows the seller to partition the item types into arbitrarily many groups of arbitrarily many sizes. 

\subsection{The equal revenue distribution} 

Our lower bounds for simple mechanisms are based on settings when the buyer's values are taken from the so-called equal revenue distribution. 

The equal revenue distribution is a natural benchmark and often used in the literature that seeks to inform practical auction design (see, for example, \cite{dughmi2009revenue}, \cite{hartline2009simple}, \cite{HN12}, \cite{sundararajan2016prediction}, \cite{paes2016field}, and \cite{medina2017revenue}). It belongs to the class of Pareto distributions, which are often used to capture real-life situations in which most of the value is held by a small part of a given population.  Further, among the Pareto distributions, the equal-revenue distribution captures the property of constant revenue and yields constant virtual values (as defined by Myerson~\cite{Myerson81}). Thus the equal-revenue distribution acts as a ``borderline'' element in the space of regular distributions, to which Myerson's optimal auction applies. Finally, the equal-revenue distribution is central to the design of simple and near-optimal mechanisms because it provides a worst-case example of when selling items separately in multi-item auctions may (counter-intuitively) perform significantly worse than selling a single bundle containing all items, even when the buyer's valuations for each item are independent and identically distributed~\cite{HN12}. In our setting, third-party signaling does not affect the revenue from item-type pricing, as the seller must eventually reveal the item type and override the data provider's signal; however, such signaling may significantly impact the revenue obtained from bundling items together. As such, the equal-revenue distribution is crucial to a number of our constructions. It is defined as follows.

\begin{definition}
\emph{A random variable $X$ with support $[1,+\infty)$ follows the \textbf{equal revenue (ER) distribution} if and only if  $\Pb \left[ X \leq x \right] = 1-\frac{1}{x}$.}
\end{definition}

The equal revenue distribution gets its name from the fact that it has constant virtual value, and every price in the distribution's support offers the same expected revenue. The equal revenue distribution also has a number of other useful properties, proved by~\citet{HN12}, which we summarize here. Unless otherwise specified, $\log$ is taken to be the natural logarithm.

\begin{lemma}[\citet{HN12}]\label{lem: ER_property}
Let $n \geq 2$ be an integer, and let $I_1, \ldots, I_n$ be $n$ i.i.d random variables that follow the ER distribution. Then $\Pb \left[\frac{1}{n} \sum_{i=1}^n I_i \geq \frac{\log n}{2} \right] \geq \frac{1}{2}$, and for any $P \geq 6  \log n$, $\Pb \left[\frac{1}{n} \sum_{i=1}^n I_i \geq P \right] \leq \frac{9}{P}$.
\end{lemma}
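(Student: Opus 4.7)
The plan is to prove both tail bounds via truncation followed by Chebyshev's inequality, handling the small-$n$ regime separately where cruder bounds suffice. The common ingredient is the truncated variable $\tilde{I}_j = \min(I_j, M)$, whose survival function satisfies $\Pb[\tilde{I}_j > t] = 1/t$ on $[1,M)$, yielding by direct integration $\E[\tilde{I}_j] = 1 + \log M$ and $\E[\tilde{I}_j^2] = 2M - 1 \leq 2M$.

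For the lower-tail bound I take $M = n$, so the truncated average $\bar{\tilde{I}} = (1/n)\sum_j \tilde{I}_j$ has mean $1 + \log n$ and variance at most $2$. Chebyshev's inequality then gives
\[
\Pb\!\left[\bar{\tilde{I}} \leq \tfrac{\log n}{2}\right] \;\leq\; \frac{2}{(1 + \tfrac{\log n}{2})^2},
\]
which is at most $1/2$ precisely when $\log n \geq 2$, i.e., $n \geq 8$. Since $\bar{I} \geq \bar{\tilde{I}}$ pointwise, the first inequality follows for $n \geq 8$. For $n \leq 7$, the bound is immediate because $I_j \geq 1$ almost surely implies $\bar{I} \geq 1 \geq \tfrac{\log n}{2}$.

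For the upper-tail bound at $P \geq 6 \log n$, I again split on $n$. When $n \leq 7$, I use the trivial observation that $\sum_j I_j \geq nP$ forces $\max_j I_j \geq P$, yielding $\Pb[\sum_j I_j \geq nP] \leq n/P \leq 7/P < 9/P$. When $n \geq 8$, the hypothesis $P \geq 6 \log n$ forces $P > 12$; I split on the max at threshold $M = nP/3$, bounding $\Pb[\max_j I_j \geq M] \leq n/M = 3/P$ by a union bound, while on the complementary event $\sum_j I_j$ coincides with the truncated sum $\sum_j \tilde{I}_j$.

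The remaining probability $\Pb[\sum_j \tilde{I}_j \geq nP]$ I handle by Chebyshev. Using $\log n \leq P/6$ together with the elementary estimate $P/3 \geq 1 + \log P$ (valid for $P \geq 12$), I get $nP - \E[\sum_j \tilde{I}_j] \geq nP/2$, while $\Var[\sum_j \tilde{I}_j] \leq 2n \cdot M = 2n^2 P/3$; Chebyshev then yields $\Pb[\sum_j \tilde{I}_j \geq nP] \leq 8/(3P)$, so combined with the $3/P$ max contribution the total is $17/(3P) < 9/P$. The main obstacle is calibrating the truncation $M$ so that all three contributions (max probability, truncated mean, and allowed deviation) each scale as $\Theta(1/P)$ with small enough constants; a naive Markov bound on the truncated sum fails because $\E[\tilde{I}_j]$ carries an unwanted $\log P$ factor, whereas Chebyshev absorbs the log into the variance and produces the clean $\Theta(1/P)$ tail.
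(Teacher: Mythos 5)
Your proof is correct: the truncation at a level tied to the target threshold, the moment computations $\E[\tilde{I}]=1+\log M$, $\E[\tilde{I}^2]\leq 2M$, and the Chebyshev bounds (with the trivial small-$n$ and small-$P$ cases handled separately) all check out and yield constants well within the stated $1/2$ and $9/P$. The paper does not prove this lemma itself but imports it from \citet{HN12}, and your truncation-plus-Chebyshev argument is essentially the same route as the original proof there, so there is nothing substantively different to flag.
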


\section{Optimal mechanisms in the presence of a data provider}\label{sec:charac}

Before focusing on simple mechanisms, we first explore optimal mechanisms (which are not simple).  Specifically, in this section we provide a characterization of optimal mechanisms for a single buyer and a single item, with several possible item-types, in the presence of a third-party data provider who knows (possibly imperfect) information about the item type, and who reveals some of this information to the buyer. Note that the data provider is represented via a signaling scheme that, from the model perspective, is subsumed into a probability distribution over posteriors $\vec{\pi}$, representing beliefs of the buyer regarding the item's type. Therefore, a buyer with access to the data provider's signal has private information in the form of a valuation $\vec{V}$ and a posterior $\vec{\pi}$.

We focus on a class of mechanisms that allow the seller to charge the buyer a price that is conditional on the type of the item. Restricting our attention to such type-contingent price mechanisms is without loss of generality. The characterization we present is a type of revelation principle, similar to that presented in \citet{DPT16}, where the difference is the presence of a data provider.

Our main result in this section shows that there always exists an optimal mechanism that takes the form of a conditional price menu. Thus, before stating our characterization, we need  to formally define a conditional price menu.  
We postpone the proof of Theorem~\ref{LEM:REVPR} and discussion of the characterization to Appendix~\ref{appx:charac}.

\begin{definition}
\emph{A \textbf{menu with conditional prices} is a fixed collection of pairs $(\vec{Z},\vec{P})$, where each $\vec{Z} \in[0,1]^n$ is called an allocation rule, and each $\vec{P} \in \reals^n_+$ is called a pricing rule. The buyer selects at most one pair $(\vec{Z},\vec{P})$. After his choice has been made, the type $i$ of the item is revealed.  Given item type $i$, the buyer pays price $P(i)$, and receives the item with probability $Z(i)$.}
\end{definition}

\begin{theorem}\label{LEM:REVPR}
For any equilibrium of any mechanism $\cM$ in the presence of a data provider, such that the buyer, conditioned on the realization of her valuation vector and posterior beliefs over item types given the signal from the data provider, obtains non-negative payoff in expectation, there is a conditional price menu that is incentive compatible, interim individually rational, and provides the same revenue.
\end{theorem}

Theorem~\ref{LEM:REVPR} implies the optimal revenue is given by the solution of a linear program whose size is proportional to the number of possible pairs of value vectors and posteriors. We make use of this linear program in Section \ref{sec:adversarial}. Additionally, Theorem~\ref{LEM:REVPR} can be extended to the multi-buyer setting, in which case one can write the optimal mechanism as an interim individually rational direct revelation mechanism with no information revelation by the seller required prior to bidding, that is the solution to a linear programming problem. We state this extension formally in Appendix~\ref{appx:charac}.

Unfortunately, while such conditional price menus provide a simple and linear characterization of the optimal revenue, they are often intractable. In particular, the optimal mechanism grows linearly and the linear program that yields it grows quadratically in the number of possible pairs of value vectors and posteriors. This motivates the study of simple and tractable mechanisms, and of whether they achieve near-optimal revenue in the presence of a data provider.

\section{Simple mechanisms in the presence of a data provider}\label{sec:lower_bounds}

The complexity of the optimal mechanism highlighted by the characterization in the previous section motivates the task of designing simple mechanisms that can achieve a constant fraction of the optimal revenue.  Our results in this section highlight that such a goal is impossible.  In particular, our results give upper bounds on the fraction of the revenue simple mechanisms can be guaranteed to achieve for two notions of ``simple.''  Specifically, in Section~\ref{sec:example1}, we focus on a class of ``simple'' mechanisms defined as the better of item-type pricing and item-type bundling, echoing~\citet{BILW14} and in Section~\ref{sec:example2}, we extend our discussion of ``simple'' mechanisms to include item-type partition mechanisms.

Before moving to our results, it is important to note that the works of~\citet{HN12} and~\citet{LiY13} provide lower bounds on the revenue achievable by simple mechanisms.  An immediate consequence of Proposition~\ref{prop:log_separate} is the following\footnote{This corollary is a direct consequence of the fact the revenue from item-type pricing is not affected by the presence of a data provider, since item-type pricing requires the seller to override the provider's signal by fully revealing the item type to the buyer.}:

\begin{corollary}\label{cor:itempricing_log}
In the presence of a data provider, the better of item-type pricing and item-type bundling yields at least a $\Omega\left(\frac{1}{\log n}\right)$-approximation to the optimal revenue when there is a single seller, a single buyer, a single item for sale, and the buyer has a publicly known prior over the type of the item. 
\end{corollary}

This corollary provides context for the results that follow, which show that the revenue from item-type bundling can be negatively impacted by the presence of a data provider. In particular, a seller may not be able to obtain more than a $O\left(\frac{1}{\log n}\right)$-approximation to the optimal revenue via the better of item-type pricing and item-type bundling. Thus, the bound we provide on such mechanisms is tight.

\subsection{Item-type pricing and item-type bundling}\label{sec:example1}

Our main results focus on bounding the revenue achievable via simple mechanisms in the presence of a third party data provider. In this section, we focus on a class of simple mechanisms in which the seller runs the better of item-type pricing and item-type bundling.  These are particularly interesting mechanisms to consider given Proposition~\ref{prop:constant-factor}, where  \citet{DPT16}, using results of \citet{BILW14}, show that this style of mechanism obtains a constant fraction of the optimal revenue when a data provider is not present. In contrast, we show here that, in the presence of a data provider, the better of item-type pricing and item-type bundling cannot always achieve a constant fraction, and instead may only achieve an $O \left(\frac{1}{\log n}\right)$ fraction of the optimal revenue. As per Corollary~\ref{cor:itempricing_log}, this fraction is tight.

\begin{theorem}\label{THM:SIMPLE}
There exists a single seller, single bidder, single  item (taking one of $n$ item types) setting where, in the presence of a data provider who signals information about the item type realization to the bidder, the expected revenue of the better of item-type pricing and item-type bundling is no more than a $O\left(\frac{1}{\log n}\right)$ fraction of the expected revenue of the optimal mechanism. More specifically:
\begin{itemize}
\item In the absence of a data provider, the optimal revenue of the seller is $\Theta \left( \frac{\log^2 n}{\sqrt{n}} \right)$. The optimal revenue for item-type pricing is $\Theta \left( \frac{\log n}{\sqrt{n}} \right)$, and the optimal revenue from item-type bundling is $\Theta \left( \frac{\log^2 n}{\sqrt{n}} \right)$.
\item In the presence of a data provider, the optimal revenue is $\Theta \left( \frac{\log^2 n}{\sqrt{n}}\right)$. Both the optimal revenue from item-type pricing and the optimal revenue from item-type bundling are $O \left( \frac{\log n}{\sqrt{n}}\right)$.
\end{itemize}
\end{theorem}

This theorem highlights that there exists a setting where the introduction of a data provider does not affect the optimal revenue, but where the data provider's presence is harmful to the optimal revenue of the better of item-type pricing and item-type bundling. 

The proof of Theorem~\ref{THM:SIMPLE} relies on constructing a family of instances of our problem for which there is a gap between the optimal revenue and the revenue of simple mechanisms. The construction is given below:

\begin{construction} \label{ex:simple}
Let $n=m^2$ be the number of item types, for some integer $m$. The types are partitioned into $m$ groups $I_1,\ldots,I_m$ such that each group contains exactly $m$ types. The bidder's prior on the item type is uniform, i.e., the bidder initially believes that each item type is realized with probability $1/n=1/m^2$, and that the probability that the realized type belongs to group $I_k$ is therefore $1/m$. The bidder's valuation for type $i$ in group $I_k$ is $V(i)/k$, where $V(i)$ is a random variable drawn from the equal revenue distribution. The bidder's valuations for different item types are drawn independently of each other.

In this setting, we allow the data provider to observe to which group the item type belongs. The data provider fully reveals this information to the bidder. We show later (Section \ref{sec:dpmodeling}) that this is the signaling scheme that a strategic, revenue-maximizing data provider would sell in this scenario, and that the buyer would always opt to buy the data provider's signaling scheme; therefore, our results extend to the case of a strategic provider.

Given the data provider's signal, the bidder's posterior probability on the item being of type $i$, upon observing signal $s_k$ informing him that the group is $I_k$, is given by
$
\pi_{s_k}(i) = 
\begin{cases} 
	0 & i \notin I_k \\
   	\frac{1}{m} & i \in I_k\\
\end{cases}.
$
\end{construction}

Given this construction, the proof seeks to characterize the revenue of the optimal and simple mechanisms for such instances.  We highlight the structure of the proof by organizing it as the following sequence of lemmas, which we prove in Appendix~\ref{app:simple}.  We emphasize that, while item-type pricing is unaffected by the introduction of a data provider, the presence of a data provider can harm the optimal revenue of other classes of mechanisms.

\begin{lemma}\label{lem:item_simple}
The expected revenue from optimal item-type pricing in Construction~\ref{ex:simple} is $\Theta \left( \frac{\log n}{\sqrt{n}} \right)$, independently of whether a data provider is present.
\end{lemma}

\begin{lemma}\label{lem:bundling_simple_noDP}
The expected revenue from optimal item-type bundling in Construction~\ref{ex:simple} is $\Theta \left( \frac{\log^2 n}{\sqrt{n}} \right)$ in the absence of a data provider.
\end{lemma}

\begin{lemma}\label{lem:bundling_simple}
The expected revenue from item-type bundling in Construction~\ref{ex:simple} is $O \left( \frac{\log n}{\sqrt{n}} \right)$ in the presence of a data provider.
\end{lemma}

\begin{lemma}\label{lem:rev_opt}
There exists an item-type partition mechanism that achieves expected revenue $\Omega \left( \frac{\log^2 n}{\sqrt{n}}\right)$ in Construction~\ref{ex:simple}. The optimal revenue in Construction~\ref{ex:simple} is $\Theta \left( \frac{\log^2 n}{\sqrt{n}}\right)$.
\end{lemma}

\subsection{Item-type partitioning}\label{sec:example2}

The previous section shows that neither item-type pricing nor item-type bundling, nor the better of the two, can always achieve a constant fraction of the optimal revenue in the presence of a data provider.  However, one may wonder if the result is due to the restrictive nature of the ``simple'' mechanisms considered.  Here, we show that, in the presence of a data provider, even the more general class of item-type partition mechanisms is insufficient to guarantee a constant fraction of the optimal revenue. This is particularly tantalizing due to the fact that Construction~\ref{ex:simple} admits an item-type partition mechanism that yields a constant approximation to the optimal revenue. However, in this section, we show a construction where the best item-type partition mechanism only achieves a $O\left(1/\log \log n\right)$ fraction of the optimal revenue.  Note that this implies that our ``simpler'' simple mechanisms, item-type bundling and item-type pricing, also do not yield a constant fraction of the optimal revenue, since they are special cases of item-type partitioning. 

\begin{theorem}\label{THM:ITEM-TYPE-PARTITION}
There exists a single seller, single bidder, single item (taking one of $n$ item types) setting where, in the presence of a data provider who signals information about the item type realization to the bidder, no item-type partition mechanism can achieve revenue higher than $O\left(\frac{1}{\log \log n}\right)$ of the optimal revenue. More specifically:
\begin{itemize}
\item In the absence of a data provider, the optimal revenue of the seller is $\Theta \left(\log n \right)$. The optimal revenue from item-type partitioning is $\Theta(\log n)$, and is achieved for the item-type bundling partition. 
\item In the presence of a data provider who signals information about the item type realization to the bidder, the optimal revenue is $\Theta \left( \log \log n \right)$, and is achieved by Mechanism~\ref{mech:loglog} below. The optimal revenue from item-type partition mechanisms is $\Theta(1)$.
\end{itemize}
\end{theorem}
This theorem illustrates a setting where the introduction of a data provider decreases the revenue of optimal mechanisms, and where restricting to item-type partitioning mechanisms further harms revenue in the presence of a data provider. Therefore, there could be strong incentives for a seller to be a data monopolist, particularly if the seller has a preference to run simple mechanisms. 

Again, the proof of the theorem relies on a construction, which we detail in the following.

\begin{construction}\label{ex:conditioning}
Given an integer $m$, let $n=2^{m}$ be the number of item types. 
The bidder's prior on the item type is uniform, i.e., the bidder initially believes the item type takes each $i \in [n]$ with probability $1/n$. 
The bidder's valuation for each type is drawn i.i.d. from an equal revenue distribution.

We consider $m$ possible partitions of the $n$ types. Given a particular $k \in [m]$, we partition the set of all types into $n_k=2^{m-k}$ subsets of size $2^{k} \geq 2$ each. Specifically, for $k\in [m]$ we partition the set of types into the subsets $I_{k,1}$ to $I_{k,n_k}$, where $I_{k,j} = \{(j-1) \cdot 2^k + 1, \ldots, j \cdot 2^k\}$ for all $j \in [n_k]$.

The information we allow the data provider to observe is structured as follows. First, a value of $k \in [m]$ is drawn  according to the following distribution: for $k \leq m-1$, $k$ is drawn with probability $\frac{1}{k(k+1)}$; $m$ is drawn with the remaining probability $\frac{1}{m}$. The value $k$ is drawn, importantly, independently of the type $i$ of the item. Then, the data provider observes which group of size $n_k$ (i.e., among $I_{k,1}$ to $I_{k,n_k}$) the item type lies in. Given the observations, the data provider reveals his full information to the bidder, namely, exactly which group of size $n_k$ the item type belongs to.  We show later (Section \ref{sec:dpmodeling}) that this is what a strategic, revenue-maximizing data provider would reveal in this scenario. We denote by $s_{k,j}$ the realization of the signal that indicates to the bidder that the item belongs to group $I_{k,j}$. We call $k$ the size indicator.
\end{construction}

We break the proof of the theorem down into a sequence of lemmas, each of which is proven in Appendix~\ref{app:item-type-partition}.  Note that the first part of the theorem, when a data provider is not present, is a direct consequence of~\citet{HN12} and~\citet{BILW14}. 

\begin{lemma}~\label{lemma:partition}
The expected revenue from the optimal item-type partition mechanism is $O(1)$ in Construction~\ref{ex:conditioning}.
\end{lemma}

\begin{lemma}~\label{lemma:loglog}
There exists a mechanism that yields revenue at least $\Omega \left( \log \log n \right)$ in Construction~\ref{ex:conditioning}. The optimal revenue in Construction~\ref{ex:conditioning} is $\Theta \left( \log \log n \right)$.
\end{lemma}

To prove Lemma~\ref{lemma:loglog}, we first construct a mechanism that achieves revenue $\Omega(\log\log n)$ in Construction~\ref{ex:conditioning}.  In particular, we consider the following design.
\begin{mechanism}\label{mech:loglog}
The seller offers a menu of $\sum_{k=1}^{m} n_k$ options. For every $\kappa \in [m]$, and every $ \iota \in [n_\kappa]$, the menu contains the following option $L_{\kappa,\iota}$: the bidder first pays $P_{\kappa} = \frac{1}{8} \log 2^{\kappa} = \frac{ \log 2}{8} \kappa$, then gets the item if and only if it is in group $I_{\kappa,\iota}$. Note that the price only depends on $\kappa$.  
\end{mechanism}
To show that Mechanism~\ref{mech:loglog} 
 yields revenue $\Omega \left( \log \log n \right)$ in Construction~\ref{ex:conditioning},
we need the following lemma, which characterizes the bidder's behavior in the mechanism. More specifically, we show in the following lemma that if the bidder receives signal $s_{k,j}$, he purchases the corresponding option $L_{k,j}$ in Mechanism~\ref{mech:loglog} with probability almost $1$.

\begin{lemma}\label{clm: inter_ex2} In the the setting of  Construction~\ref{ex:conditioning}, suppose the bidder receives signal $s_{k,j}$ (indicating that the item belongs to group $I_{k,j}$ of size $2^{k}$) for $k \geq 2 \cdot 10^2 + 1$. Consider the menu of options proposed by the seller in Mechanism~\ref{mech:loglog}. 
 With probability at least $1-10^{-3}$, no option $L_{\kappa,\iota}$ with either $\kappa \ne k$ or $\iota \ne j$ yields a higher utility for the bidder than option $L_{k,j}$, and $L_{k,j}$ yields positive utility to the bidder.
\end{lemma}

We remark that Mechanism~\ref{mech:loglog}, although it has a concise description, is not ``simple'' in any of the usual senses, and is in fact carefully tailored to the incentives of the bidder. We do not know of ``simpler'' mechanisms that are approximately optimal in this setting.

\section{The behavior of the data provider}
\label{sec:dpmodeling}

So far, we have not discussed the behavior of the data provider.   The characterization of  optimal mechanisms, and our bounds on the achievable revenue of simple mechanisms in the previous sections, do not depend on specific assumptions about the behavior of the data provider. However, it is useful to consider specific models of the data provider when interpreting our lower bounds.  In particular, one may wonder if more positive results are possible for some standard game-theoretic models of the interaction between the data provider and the seller.

To this end, it is natural to consider two extreme models for a data provider's incentives: (i) the data provider may be \textit{strategic}, seeking to maximize his revenue from selling his information, or (ii) the data provider may be \textit{adversarial}, seeking to minimize the profits of the seller. In this section, we characterize the behavior of the data provider in each of these models. Our results provide a clear contrast between the two models: strategic data providers always reveal all of their information to the bidder, while adversarial data providers may not reveal all available information to the bidder. This has implications for mechanism design, highlighting the importance of mechanisms with good worst-case bounds on revenue, independent of data provider behavior.

Importantly, the proofs of the bounds on the achievable revenue of simple mechanisms in Theorems~\ref{THM:SIMPLE} and~\ref{THM:ITEM-TYPE-PARTITION} use constructions that can be interpreted as complete revelation by the data provider.  Thus, those bounds apply to the case of a strategic data provider.  Concretely, this means that, in the presence of a strategic, revenue-maximizing data provider, simple mechanisms cannot guarantee near-optimal revenue for the seller.  

\subsection{A strategic data provider}
\label{sec:dp-incentive}

One may hope that the impossibility results of Section~\ref{sec:lower_bounds} no longer apply if the data provider is strategic, and seeks to reveal data to maximize its revenue.  However, in this section, we show that the results directly extend to the case of a revenue-maximizing data provider. The reason for this the following lemma, which highlights that the data providers revenue-maximizing scheme is always to fully reveal its information. In particular, all the examples we use to prove Theorems~\ref{THM:SIMPLE} and~\ref{THM:ITEM-TYPE-PARTITION} make use of constructions where the data provider uses full revelation.  Thus, the implications of those theorems also hold under the assumption of a strategic data provider. 
 
\begin{theorem}\label{lem:opt_fullinfo}
The fully-revealing signaling scheme is revenue-maximizing for the data provider.  
\end{theorem}

\begin{proof}
Formally, let $\Alphabet$ be the range of~$X$, and let $S^*(X)$ be the distribution where all probability is point massed on~$X$.  In what follows, we use $S^*$ to denote this fully-revealing signaling scheme. 

We first show that, for any mechanism adopted by the seller, and for any buyer's value vector, a fully revealing signaling scheme maximizes the ex-ante expected utility of the buyer. Let us fix a mechanism $\cM$ and valuation vector $\vec V$.  Let $S$ be an arbitrary signaling scheme that maps the data provider's information~$X$ to $\Delta(\Alphabet)$. Recall that the buyer forms a posterior distribution $\vec \pi_s$ over the item type~$\itemtype$ when receiving signal $\sig$ drawn from $S(X)$.  We denote by $U(\vec V, S)$ the buyer's utility in~$\cM$ when her value vector is~$\vec V$ and he purchases signaling scheme~$S$ from the data provider.  

A key technical lemma for our argument is the following, which is provided in Appendix~\ref{app:strategic}.

\begin{lemma}\label{LEM:FULL_REVELATION_STRATEGIC}
For any valuation vector $\vec V$, any information $X$ received by the data provider, and any signaling scheme $S(.)$, $U(\vec V, S^*) \geq U(\vec V, S)$.
\end{lemma}

Given this lemma, let $U(\vec V)$ be the buyer's utility if his value vector is~$\vec V$ and he does not purchase from the data provider.  The ex-ante value of a signaling scheme~$S$ for the buyer is then $\Ex[\vec V]{U(\vec V, S) - U(\vec V)}$. As in our model, the buyer must decide whether to enter a contract with the data provider and by his signaling scheme before the buyer observers the realization of his own valuation for each type, this difference is exactly the highest price the buyer is willing to pay for the scheme~$S$. Equivalently, this is the maximum price the data provider can charge, and as such his maximum achievable revenue under signaling scheme $S$. Because the difference is maximized when $\Ex[\vec V]{U(\vec V, S)}$ is maximized, which happens when $S = S^*$ by Lemma~\ref{LEM:FULL_REVELATION_STRATEGIC}, we directly obtain that $S^*$ is revenue maximizing for the data provider.
\end{proof}

Building on the previous theorem, the following corollary highlights the contrast in revenue between the settings where a data provider is or is not present; highlighting the damaging impact of a third-party data provider for the seller. (Recall that item-type partition mechanisms yield at least $\frac{1}{6}$ of the optimal revenue when no data provider is present \citep{BILW14}.) 

\begin{corollary}\label{cor: revenue_loss_strategic}
There exists a single seller, single bidder, single item setting where, in the presence of a revenue-maximizing data provider, no item-type partition mechanism can achieve revenue higher than $O\left(\frac{1}{\log n}\right)$ of the optimal revenue achievable by an item-type partition mechanism when no data provider is present. More precisely:
\begin{itemize}
\item In the absence of a data provider, the optimal revenue for the seller is $\Theta\left(\log n\right)$, and is attained by item-type bundling.
\item In the presence of a data provider, the optimal revenue is $1$, and is attained by item-type pricing.
\end{itemize}
\end{corollary}

\begin{proof}
Consider a setting with $n$ item types, distributed i.i.d., according to an Equal Revenue distribution. The optimal revenue in the absence of a data provider is $\Theta (\log n)$ by~\cite{HN12}. As 
\[
P\cdot\Pr \left[ V(i) \geq P \right] = P \cdot \frac{1}{P} = 1
\]
for all $P$ when $V(i)$ follows an equal revenue distribution, the optimal revenue in the presence of a revenue-maximizing data provider that exactly knows and reveals the item type is $1$. On the other hand, for independent valuations, the seller can always guarantee a $\Omega \left( \frac{1}{\log n}\right)$ of the revenue by revealing the item type and pricing optimally, as per~\cite{LiY13}. Note that this example is a special case of Construction~\ref{ex:conditioning}.
\end{proof}

\subsection{An adversarial data provider}~\label{sec:adversarial}

We now consider an \emph{adversarial} data provider, who aims
to minimize the seller's revenue.  In contrast to what happens in the case of a strategic data provider, the main result in this section shows that revealing {\em less} information can sometimes be \textit{more} damaging to the seller's revenue. This phenomenon, however, does not occur when the data provider has perfect information about the item type. 

Recall from Section~\ref{sec:dp-incentive} that we use $S^*$ to denote the signaling scheme that reveals full information, and let us denote $REV(S)$ to denote the seller's optimal revenue when the data provider adopts the signaling scheme~$S$. The theorem below formally states our counter-intuitive result that the data provider can minimize the revenue of the seller by only {\em partially} revealing information.

\begin{theorem}\label{lem:adversarial_partial_revelation}
Let the number of item types be $n=2$.
There exists a distribution over the buyer's valuations $\vec{V}$, a prior $\vec{\pi}$ over the item type and a partially informative distribution over the data provider's information~$X$, such that there is a signaling scheme $S$, with $REV(S) < REV(S^*)$.
\end{theorem}

The proof of this result uses the following construction:
\begin{construction}~\label{ex:nonmonotone}
Let the bidder's valuation for each item type be drawn i.i.d., taking value $1$ with probability $1/2$ and value $2.1$ with probability $1/2$. The bidder and the data provider share a common prior $\vec{\pi} =(3/4,1/4)$. That is, they both initially believe the item is of type $1$ with probability $3/4$ and of type $2$ with probability $1/4$. The data provider receives information $X$ on some support $ \{x_1,x_2\}$. If the item type is $1$, the provider receives $x_1$ with probability $2/3$ and $x_2$ with probability $1/3$, and if the item type is $2$, the provider receives $x_2$ with probability $1$.
\end{construction}

\begin{proof}
We show that in Example~\ref{ex:nonmonotone}, the data provider has a signaling scheme under which the seller's optimal revenue is less than under the fully-revealing scheme.

 If the data provider reveals full information, then with probability $1/2$, the bidder receives $x_1$ and has posterior $\vec{\pi}_{x_1} = (1,0)$ (when receiving $x_1$, the provider knows the item must be of type~$1$); with probability $1/2$, he receives $x_2$ and thus has posterior $\vec{\pi}_{x_2} = (1/2,1/2)$. Remember that by the characterization of Section~\ref{sec:charac}, the seller's best response to the data provider can be written as an interim individually rational mechanism that does not require any information revelation; further, the optimal revenue from such mechanisms can be obtained by solving a linear program. Computing the seller's optimal revenue via linear programming, we see that the revenue is $REV(S^*) = 1.1062$.

Consider the following partially revealing signaling scheme: let $\Sigma$, the range of the signaling scheme, be $\{\partsig_1, \partsig_2, \partsig_3\}$; let $\varphi$ be the mapping $\varphi(x_i) = \partsig_i$ for $i = 1, 2$;  
when the provider receives realization $x$ of $X$ that belongs to $\{x_1,x_2\}$, the provider outputs $\varphi(x)$ w.p. $1-\epsilon = 0.86$ and outputs $\partsig_3$ with probability $\epsilon = 0.14$. Given this signaling scheme, when the bidder receives signal $\partsig_1$ (which occurs with probability $\frac{1}{2}(1-\epsilon) = 0.43$), he infers $X=x_1$, and so his posterior is $\vec{\pi}_{\partsig_1} = \vec{\pi}_{x_1}$. Similarly, when he receives signal $\partsig_2$ (which also occurs with probability $0.43$), the bidder infers that it must be the case that $X=x_2$, hence he has posterior $\vec{\pi}_{\partsig_2} = \vec{\pi}_{x_2}$. Finally, when the bidder receives $\partsig_3$ (which occurs with probability $\epsilon = 0.14$), he infers that $X= x_1$ or $X = x_2$ with equal probability by symmetry, and hence his posterior is $\vec{\pi}_{\partsig_3} = \frac{1}{2} \left(\vec{\pi}_{x_1} + \vec{\pi}_{x_2} \right) = \left(3/4,1/4 \right) = \vec{\pi}$. Computing the optimal revenue of the seller via linear programming, using the the results of Section~\ref{sec:charac}, we get that the revenue is only  $REV(S) = 1.0991 < 1.1062 = REV(S^*)$. 
\end{proof}

It may seem counter-intuitive that the data provider can harm the seller {\em more} by providing {\em less} information to the bidder. After all, one consequence of the characterization of Section~\ref{sec:charac} is that the seller can only lower her revenue by revealing more information to the bidder. However, information from the provider and information from the seller are not equivalent from the perspective of the seller, because the seller does not get to see the realization of the signal that the provider sends to the bidder. When the seller reveals information, she knows exactly what the bidder's posterior is, and can act as a function of the realized posterior; she is then faced with exactly the problem solved by \citet{DPT16} for that realized posterior. When the data provider reveals information, the seller, who only knows the signaling scheme but not the signal, faces a distribution of posteriors and does not know which of them is correct.

In particular, in Example~\ref{ex:nonmonotone}, in the fully-revealing signaling scheme there are two posteriors $\vec{\pi}_{x_1}$ and $\vec{\pi}_{x_2}$. Each
occurs with probability $1/2$. The seller, intuitively, wishes to design a menu with one option for each of these two posteriors.
In the partially revealing signaling scheme there is a third posterior, $\vec{\pi}_{\partsig_3} = \vec{\pi}$, which is an average
of $\vec{\pi}_{x_1}$ and $\vec{\pi}_{x_2}$. In fact, the first signaling scheme is a mean-preserving spread of the second one. The seller, intuitively, wishes to design a menu with three options, one for each posterior.

This third posterior induces a trade-off in the linear program the seller solves to find the optimal mechanism. The second linear program has more IC constraints for
the two posteriors than the linear program given the fully revealing signaling scheme. This makes the revenue the seller gets from bidders with posteriors
$\vec{\pi}_{\partsig_1}$ and $\vec{\pi}_{\partsig_2}$ lower than before. The trade-off is that there is now a new posterior $\vec{\pi}_{\partsig_3}$, from which the seller can make additional revenue. Example~\ref{ex:nonmonotone} is constructed so that the harm from the additional posterior exceeds the benefit.

However, the lemma below shows that, when the data provider perfectly knows the item type, he minimizes the expected revenue of the seller by fully revealing said type. 

\begin{lemma}
If the data provider is adversarial and has full information about the type of the item (that is, if $X$ is perfectly correlated with the item type), the optimal strategy for the data provider is to reveal $X$, i.e., to use the fully-revealing signaling scheme~$S^*$.  
\end{lemma}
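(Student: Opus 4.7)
The plan is to establish two matching bounds on the seller's optimal revenue. Let $R_{\text{full}} := \sum_{i} \pi(i)\,R^{*}(D(i))$, where $R^{*}(D)$ denotes Myerson's optimal single-buyer posted-price revenue against value distribution $D$. I will first show that under full revelation the seller's optimal revenue equals $R_{\text{full}}$, and then show that under any signaling scheme the seller can guarantee at least $R_{\text{full}}$. Together these bounds imply that no signaling scheme can drive the revenue below full revelation, so revealing $X$ is optimal for an adversarial data provider.

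First I would handle full revelation. When $S(X)=X$, the bidder's posterior after signal $s_{i}$ is the point mass $\delta_{i}$, so by Lemma~\ref{lem:revpr} we may restrict the seller to menus with conditional prices. With a degenerate posterior, the bidder's expected utility from any option $(\vec{Z},\vec{P})$ reduces to $Z(i)V(i)-P(i)$, which depends only on the coordinate $i$. Hence the seller's design problem decouples into $n$ independent single-item, single-buyer pricing problems, one per realized type, each optimally solved by the Myerson monopoly price $P^{*}(i)$ yielding revenue $R^{*}(D(i))$. Averaging over $\pi$ gives exactly $R_{\text{full}}$.

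Next I would exhibit a universal seller strategy that yields $R_{\text{full}}$ regardless of $S$. Let the seller simply ignore the signal and run item-type pricing at Myerson prices: announce the realized type $i$ and offer the item at price $P^{*}(i)$. The key observation is that the seller's own revelation of $i$ refines any bidder posterior to $\delta_{i}$, so whatever signal $s$ the data provider issued becomes irrelevant once the seller reveals $i$. The bidder's decision therefore depends only on whether $V(i)\geq P^{*}(i)$, and the resulting expected revenue is exactly $\sum_{i}\pi(i)R^{*}(D(i))=R_{\text{full}}$. Hence the seller's optimum under any $S$ is at least $R_{\text{full}}$, matched with equality when $S$ fully reveals the type.

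The main conceptual step, and the one that might require some care, is justifying that item-type pricing remains feasible and yields the claimed revenue in the presence of the data provider. Formally, I would appeal to Lemma~\ref{lem:revpr} to recast item-type pricing as a valid conditional-price menu (one option per type, allocating with probability $1$ only on its own type, with conditional price $P^{*}(i)$ on type $i$ and $0$ elsewhere), and verify via the (IC) and (IR) constraints derived in the proof of Lemma~\ref{lem:revpr} that the bidder's best response is to pick option $i$ iff $V(i)\geq P^{*}(i)$, independently of which signal $s$ was received. This rigorously closes the lower bound of $R_{\text{full}}$ on the seller's revenue under arbitrary $S$ and completes the argument.
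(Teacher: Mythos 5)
Your proof is correct and follows essentially the same route as the paper: show that under any signaling scheme the seller can secure the full-revelation benchmark by revealing the item type herself (which makes the provider's signal redundant), and that full revelation achieves exactly that benchmark, so it minimizes revenue. The only cosmetic difference is that you instantiate the benchmark explicitly as $\sum_i \pi(i)R^{*}(D(i))$ via the Myerson/Riley--Zeckhauser posted-price decomposition, whereas the paper simply works with the abstract quantity $SREV$ defined as the optimal revenue when the type is revealed.
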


\begin{proof}
Let $SREV$ be the optimal revenue that the seller can achieve when the type of the item is revealed. On the one hand, the seller can always guarantee a revenue of $SREV$ by revealing the type of the item and then selling this type optimally, no matter what signaling scheme is used by the data provider.  So $SREV \leq REV(S)$ for any~$S$. On the other hand, when $X$ fully reveals the type, and $S^*$ fully reveals this information, the buyer would know the type, and by the definition of~$SREV$ the optimal revenue that can be achieved by the seller is $SREV$.  Therefore $SREV \geq REV(S^*)$.  Therefore $REV(S^*) \leq REV(S)$ for any scheme~$S$.
\end{proof}

Finally, we characterize the revenue the seller loses due to the presence of an adversarial data provider. Note that the revenue the seller can obtain when there is an adversarial data provider is less than what would be achieved under a strategic, revenue-maximizing data provider (who uses a fully-revealing signaling scheme). Thus, it follows from Corollary~\ref{cor: revenue_loss_strategic} that the presence of an adversarial data provider can greatly harm the revenue of the seller. 

\begin{corollary}\label{cor: revenue_loss_adversarial}
There exists a single seller, single bidder, single item setting where, in the presence of an adversarial data provider, no item-type partition mechanism can achieve revenue higher than $O\left(\frac{1}{\log n}\right)$ of the optimal revenue achievable by an item-type partition mechanism when no data provider is present. More precisely:
\begin{itemize}
\item In the absence of a data provider, the optimal revenue of the seller is $\Theta\left(\log n\right)$, and is attained by item-type bundling.
\item In the presence of a data provider, the optimal revenue is $1$, and is attained by item-type pricing.
\end{itemize}
\end{corollary}

\section{Concluding remarks}

Motivated by the increasing prominence of data markets, we study the impact of third-party data providers on the design of markets.  Specifically, we study the impact third-party data providers have on the ability of simple mechanisms to achieve a constant fraction of the optimal revenue.  This study is inspired by practical settings such as ad auctions.  In ad auctions a seller wants to sell an advertising slot to advertisers, but said advertisers may not fully understand the characteristics of the audience associated with the slot. However, the advertisers may be able to access additional information, such as cookie data, about said audience from third-party providers.  The question our results ask is whether the existence of third-party providers negatively impact the seller through a need for additional complexity in the auctions or revenue loss.  Our results show that yes, the presence of third-party data provider outside of the control of the seller can significantly hurt the seller's revenue and requires additional complexity in the auction design. As such, there are significant incentives for the seller to maintain a monopoly on the information available about the good for sale (e.g. ad slots).  This may explain why it is beneficial for companies such as Google provide their own cookie-matching services when running ad auctions, which limits the ability of third-party data providers to give new data to advertisers. 

In fact, our paper gives a  fine-grained characterization of what leads to losses in the seller's revenue when a third-party data provider is present. An obvious reason for lost revenue is that releasing additional information to the buyers participating in an auction allows them to bid more efficiently, which in turns drives down the optimal revenue the seller can hope to achieve. However, our results highlight that this is not the sole reason why the presence of a data provider reduces the seller's revenue.  Additionally, the seller's revenue loss is exacerbated by the use of simple, practical, and implementable mechanisms, which we show fail to capture the optimal achievable revenue in the presence of a third-party data provider. This is surprising, as in the mechanism design literature one of the main motivation behind the simple mechanisms we study is their ability to nearly capture the optimal revenue of the seller~\cite{BILW14,DPT16}. Given that it is desirable to run simple and practical mechanisms in real-life auctions, this raises the question of whether there exist alternative classes of ``simple'' mechanisms that achieve near-optimal revenue in the presence of a third-party data provider.  The search for such mechanisms represents an important direction for future research.

\section{Acknowledgments}
Cai thanks the Sloan Foundation for its support through a Sloan Foundation Research Fellowship. Part of Cai's work was done under the support of the NSERC Discovery grant RGPIN-2015-06127 and the FRQNT grant 2017-NC-198956. Echenique thanks the National Science Foundation for its support through grants SES-1558757 and CNS-1518941. Fu thanks the NSERC for its support through Discovery grant RGPAS-2017-507934 and Accelerator grant RGPAS-2017-507934. Ligett's work was supported in part by NSF grants CNS-1254169 and CNS-1518941, US-Israel Binational Science Foundation grant 2012348, Israeli Science Foundation (ISF) grant 1044/16, the United States Air Force and DARPA under contracts FA8750-16-C-0022 and FA8750-19-2-0222, and the HUJI Cyber Security Research Center in conjunction with the Israel National Cyber Directorate (INCD) in the Prime Ministers Office. Wierman thanks the National Science Foundation for its support through grants NSF AitF-1637598, CNS-1518941, as well as the Linde Institute of Economic and Management Science at Caltech. Ziani thanks the National Science Foundation for its support through grants CNS-1331343 and CNS-1518941, the US-Israel Binational Science Foundation through grant 2012348, and the Linde Graduate Fellowship at Caltech. We thank Noam Nisan for extremely useful comments and discussions.

\bibliographystyle{plainnat}
\bibliography{references}

\appendix

\section{Reducing multi-item auctions to single-item, multi-type auctions: an example}\label{app: reduction}

Consider a single bidder, single item setting with $n$ possible types, prior $\pi$ and valuation vector $\vec{V} = \left(V(1), \ldots, V(n) \right)$, distributed according to joint distribution $D$.~\citet{DPT16} show that this setting in fact reduces to a multi-item auction with $n$ items, in which the bidder's valuation for the items are distributed as follows: i) draw $\vec{V}$ according to $D$, then ii) let the bidder's valuation vector be $\vec{V'} = \left(\pi(1) V(1), \ldots,\pi(n) V(n) \right)$, the coordinate-by-coordinate product of $\vec{V}$ and $\pi$. 

The optimal auction in such a single-bidder, multi-item setting can be written without loss of generality as a menu of options $\{(P_o,\vec{A}_o)\}_{o}$, such that a bidder either opts out, or selects a single option $o$ in which case he i) must pay price $P_o$ then ii) receives any given item $i \in [n]$ with allocation probability $A_o(i)$. An optimal single-bidder, single-item, multi-type auction is then given by the exact same menu $\{(P_o,\vec{A}_o)\}_{o}$, where i) a bidder who picks option $o$ picks price $P_o$, but now ii) the bidder receives the (here, a single) item with probability $A_o(i^*)$ where $i^*$ is the realized item type.
For example, consider the following scenario, studied previously in~\citet{hart2015max}: 
\begin{example}\emph{
There are two item types, denoted $1$ and $2$. The bidder's prior on the types is uniform, given by $\pi(1)=\pi(2)=1/2$. The bidder's valuations $V(1)$ and $V(2)$ are i.i.d, and take values $2$, $4$ and $8$ with probabilities $1/6$, $1/2$ and $1/3$.
The equivalent multi-item auction is one with two items whose valuations $V'(1),V'(2)$ are i.i.d, and take values distributed as follows: $V'(1) = \pi(1) V(1) = V(1)/2$ and $V'(2) = \pi(2) V(2) = V(2)/2$; i.e., $1$, $2$ and $4$ with probabilities $1/6$, $1/2$ and $1/3$. As shown by~\citet{hart2015max}, the optimal mechanism in this case has two menu options:
\begin{enumerate}
\item Option $1$ has price $1$, and gives the first item with probability $1/2$ and the second with probability $0$.
\item Option $2$ has price $4$, and gives both items with probability $1$.
\end{enumerate}
This translates into a menu of options in the multi-type, single-item settings in which a bidder that selects option $1$ gets the item only if it is of type $1$, with probability $1/2$, and a bidder that selects option $2$ always get the item, independently of the item type. 
}
\end{example}

\section{Characterization of the optimal mechanism and the proof of Theorem~\ref{LEM:REVPR}}\label{appx:charac}

The characterization we present shows that the revenue achievable via any mechanism can be obtained with a conditional price menu.

A few comments about the mechanism are in order. Note that the allocation probability and price may both depend on the realized type of the item. So, one can think of the mechanism as requiring a single round of bidding, followed by a single round of information revelation (in fact, full information revelation), to determine which price $P(i)$ the bidder should pay. Additionally, note that the bidder pays regardless of whether he receives the item.  Finally, note that conditional price mechanisms are strictly more general than item-type partition mechanisms. Item-type partitioning is, in fact, an instantiation of menus with conditional prices in which each $\vec{Z} \in \{0,1\}^n$ (no fractional or probabilistic allocations are allowed), each item type is offered in exactly one option, and the conditional prices within an option are all identical. Each option then corresponds to a single subset of the partition.

Despite allowing prices and allocations to depend on the realization of the item type, the conditional price menus guarantee interim individual rationality, defined as follows.

\begin{definition}
\emph{A mechanism is \textbf{interim individually rational} (interim IR) if and only if the bidder's expected utility from participating in the mechanism, conditional on a valuation $\vec{V}$ and posterior beliefs $\vec{\pi}$ over item types, is non-negative.}
\end{definition}

Interim IR can be seen as the bidder committing to an option from the menu offered by the mechanism. One justification for this notion is that a bidder might, in theory, be engaged in many auctions simultaneously. Therefore, the bidder might care only about his average payoff across multiple purchases. While for some type realizations such a bidder may lose, with high probability his overall utility is non-negative. Interim IR can always be guaranteed by adding a dummy option with price $0$ and allocation probability $\vec{Z} = \vec{0}$, such that an agent that gets negative utility from any other option goes for the dummy option. 

\begin{proof}[Proof of Theorem~\ref{LEM:REVPR}] We treat the pair $(\vec{V},\vec{\pi}_s)$, where $\vec{V}$ is the bidder's valuation vector and $\vec{\pi}_s$ is his posterior given he sees signal $s$, as the bidder's type. 
We follow the same steps as the proof of Theorem 1 and Appendix A of~\citet{DPT16}. Consider a mechanism $\cM$ with voluntary participation. $\cM$  may use multiple rounds of communication and information revelation to the bidder. For each valuation vector $\vec{V}$ and posterior $\vec{\pi}_s$, let $A\left( \vec{V},\vec{\pi}_s \right)$ be the (possibly randomized) equilibrium strategy of the bidder when his type is $\left( \vec{V},\vec{\pi}_s \right)$.

Let $Z(i, A)$ be an indicator random variable that indicates whether the bidder gets the item when he chooses strategy $A$ and the realized item type is $i$. Similarly, let $C(i,A)$ denote the price the bidder is asked to pay. The bidder's interim expected utility is then given as follows:
$$
\E_{i \sim\vec{\pi}_s} \left[ \E\left[Z(i, A) \cdot V(i) - C(i, A)\right]  \right]
$$
where the first (outer) expectation is with respect to the randomness of the item type, while the second (inner) expectation is with respect to the randomness in the choices of the mechanism, the information revealed and the actions $A$ of the bidder.

For all possible types $\left( \vec{V},\vec{\pi}_s \right)$, and for all possible misreports $\left(\vec{V'},\vec{\pi}_{s'}\right)$ of the bidder, for $A$ to be an equilibrium strategy it must be the case that 
\begin{align*}
& \E_{i \sim\vec{\pi}_s}\left[ \E \left[Z \left(i, A\left( \vec{V},\vec{\pi}_s \right)\right) V(i) - C\left(i, A\left( \vec{V},\vec{\pi}_s \right)\right)\right]\right] 
\\ &\geq 
\E_{i \sim\vec{\pi}_s}\left[ \E \left[Z \left(i, A\left( \vec{V'},\vec{\pi}_{s'} \right) \right) V(i) - C\left(i, A\left( \vec{V'},\vec{\pi}_{s'} \right)\right)\right]\right]. 
\end{align*}

Now define the variables
\begin{align*}
&z_{i}\left( \vec{V},\vec{\pi}_s \right) = \E \left[Z \left(i, A\left( \vec{V},\vec{\pi}_s \right)\right) \right]
\\&c_{i}\left( \vec{V},\vec{\pi}_s \right) = \E \left[C\left(i, A\left( \vec{V},\vec{\pi}_s \right)\right)\right].
\end{align*}
The above equation can be rewritten as 
\begin{align}
\tag{IC}
\sum\limits_i \pi_s(i) \left( z_{i}\left( \vec{V},\vec{\pi}_s \right) V(i) - c_i\left( \vec{V},\vec{\pi}_s \right) \right)
\geq 
\sum\limits_i \pi_s(i) \left( z_{i}\left( \vec{V'},\vec{\pi}_{s'} \right) V(i) - c_i\left( \vec{V'},\vec{\pi}_{s'} \right) \right).
\end{align} 

Moreover, since the equilibrium $A$ respects voluntary participation, the bidder's equilibrium payoff must be non-negative. As a consequence, we have
\begin{align}
\tag{IR}
\sum\limits_i \pi_s(i) \left( z_{i}\left( \vec{V},\vec{\pi}_s \right) V(i) - c_i\left( \vec{V},\vec{\pi}_s \right) \right)
\geq 0.
\end{align}

Finally, we note that the revenue of the seller is given by
\begin{align*}\label{eq: rev}
R = \sum_{\vec{\pi}_s,\vec{V}} \Pb\left[ \vec{V},\vec{\pi}_s\right] \sum_i \pi_s(i) \cdot c_i\left( \vec{V},\vec{\pi}_s \right), 
\end{align*}
where $\Pb\left[\vec{V},\vec{\pi}_s \right]$ is the probability the realized type of the bidder is $\left(\vec{V},\vec{\pi}_s \right)$. 

A mechanism that satisfies constraints (IC) and (IR) and yields revenue $R$ can clearly be implemented as an interim IR menu with conditional prices, with options given by 
$
\left(\vec{z}\left( \vec{V}, \vec{\pi}_s \right),\vec{c} \left( \vec{V}, \vec{\pi}_s \right) \right)
$
for each possible type $\left( \vec{V}, \vec{\pi}_s \right)$. Hence, there exists an incentive compatible, individually rational, conditional price menu that provides the same revenue as mechanism $\cM$. 
\end{proof}

We now provide a characterization of the optimal mechanisms in the presence of several bidders. We consider a setting in which different bidders may have different priors on the item type, and may receive different, private signals. In this case, a bidder only has knowledge of his own type -- defined as the combination of his private valuation vector $V_j$ and private signal $s_j$ --, and his strategy is a function $A_j(V_j,s_j)$ of his private type. We will show that for every mechanism $\cM$ and every ex-post Nash Equilibrium strategy in said mechanism $\cM$, there exists an interim individually rational and ex-post incentive compatible direct revelation mechanism that achieves the same revenue. Further, the best such direct revelation mechanism is the solution to a linear program (of infinite size in the number of possible agent types and signals, and exponential size in the number of agents). While we focus on the ex-post case, we note that the result extends to the Bayesian case, and is obtained via a similar proof; i.e., for every mechanism and Bayes-Nash equilibrium of said mechanism, there is a corresponding direct revelation mechanism that is Bayes incentive compatible and achieves the same revenue. 

Formally, we first define what an ex-post Nash Equilibrium of a mechanism is:

\begin{definition}
\emph{
Consider a mechanism $\cM$ with $b$ agents. Let $u_j\left(\vec{T}_j,\vec{A}\right)$ be the utility obtained by agent $j$, when his type is $\vec{T}_j$ and when the agents' (possibly randomized) strategies are given by $\vec{A} = \left(A_1,\ldots,A_b\right)$. A strategy profile $\vec{A^*} = \left(A^*_1,\ldots,A_b\right)$ is an ex-post Nash Equilibrium for mechanism $\cM$ if and only if for every agent $j$, for all possible types $\vec{T}_{-j}$ of the remaining agents and any alternate strategy $A_j'$ for bidder $j$, 
\[
u_j\left(\vec{T}_j,\left(A^*(\vec{T}_1),\ldots,A^*(\vec{T}_j) \right)\right) \geq u_j\left(\vec{T}_j,\left(A_j',A_{-j}^*(\vec{T}_{-j}) \right)\right).
\]
A direct-revelation mechanism $\cM$ with $b$ agents is ex-post incentive compatible if it is an ex-post Nash Equilibrium for each agent to truthfully report his type to the mechanism. 
}
\end{definition}

Informally, an ex-post Nash Equilibrium is a strategy in which an agent $j$ does not want to deviate from his current strategy, \emph{no matter what the types of the other agents are, as long as they play according to the equilibrium strategy $A^*$}. I.e., even \emph{after} (hence ``ex-post'') observing the other agents' types, $j$ cannot improve his expected utility (where the expectation is taken over the randomness of the strategies) by changing his strategy. Similarly, a direct revelation is \emph{ex-post} incentive compatible if and only if truthfully reporting his type maximizes his utility from participating in the mechanism, no matter what the true types of the other agents are, assuming they all truthfully report their type. We can now write the main result:
\begin{theorem}
For any ex-post Nash Equilibrium of any mechanism $\cM$ with $b$ buyers in the presence of a data provider, such that every buyer, conditioned on the realization of his own valuation vector and posterior belief over item types given the signal from the data provider, obtains non-negative payoff in expectation, there is a direct revelation mechanism that is ex-post incentive compatible, interim individually rational, and provides the same revenue. Further, a revenue-maximizing incentive compatible and interim individually rational direct revelation mechanism can be written as the solution of a linear program.
\end{theorem}

\begin{proof}
Once again, we follow the same steps as the proof of Theorem 1 and Appendix A of~\citet{DPT16}. For every bidder $j \in [b]$, we treat the pair $(\vec{V}_j,\vec{\pi}_{s_j})$, where $\vec{V}$ is bidder $j$'s valuation vector and $\vec{\pi}_{s_j}$ is his posterior when he sees signal $s_j$, as the bidder's type. For simplicity of notations, we often write $\vec{T}_j = \left( \vec{V}_j,\vec{\pi}_{s_j} \right)$ in the rest of the proof. Consider a mechanism $\cM$ with voluntary participation. $\cM$  may use multiple rounds of communication and information revelation to the bidder. For each type $\vec{T}_j$, let $A_j\left( \vec{T}_j \right)$ be the (possibly randomized) equilibrium strategy of the bidder when his type is $\vec{T}_j$.

Let $Z_j(i, A)$ be an indicator random variable that indicates whether bidder $j$ gets the item when buyers choose strategies $A = \left(A_1, \ldots, A_b\right)$ and the realized item type is $i$. Similarly, let $C_j(i,A)$ denote the price the bidder is asked to pay. Bidder $j$'s interim expected utility is then given as follows:
$$
\E_{i \sim\vec{\pi}_{s_j}} \left[ \E\left[Z_j(i, A) \cdot V_j(i) - C_j(i, A)\right]  \right]
$$
where the first (outer) expectation is with respect to the randomness of the item type, while the second (inner) expectation is with respect to the randomness in the choices of the mechanism, the information revealed and the actions $A$ of the bidders, and holds the bidders valuations and signals fixed. Note that the agents' valuations are fixed. 

Let $A(\vec{T}) \triangleq \left(A_1(\vec{T}_1),\ldots,A_b(\vec{T}_b)\right)$. For $A$ to be an ex-post equilibrium strategy, it must in particular be the case for all possible bidder types $(\vec{T}_1,\ldots,\vec{T}_b)$ and for all possible misreports $\vec{T}'_j = \left( \vec{V}'_j,\vec{\pi}_{s'_j} \right)$ for bidder $j$ that, 
\begin{align*}
& \E_{i \sim\vec{\pi}_{s_j}}\left[ \E \left[Z_j \left(i, A(\vec{T})\right) V_j(i) - C_j\left(i, A(\vec{T})\right)\right]\right] 
\\ &\geq 
\E_{i \sim\vec{\pi}_{s_j}}\left[ \E \left[Z_j \left(i, A\left( \vec{T}'_j,\vec{T}_{-j}\right) \right) V_j(i) - C_j\left(i, A\left( \vec{T}'_j,\vec{T}_{-j}\right) \right)\right]\right], 
\end{align*}
noting that $A'_j = A_j(\vec{T}'_j)$ is a possible deviation for agent $j$, where the outer expectation is over $i$, the inner expectation is over the randomness in the choices of the mechanism and the actions $A$ of the bidder, and the agents' valuations and received signals are fixed.
\begin{align*}
&z_{ij}\left( \vec{T}_j,\vec{T}_{-j} \right) = \E \left[Z_j \left(i, A\left( \vec{T}_j,\vec{T}_{-j} \right)\right) \right]
\\&c_{ij}\left( \vec{T}_j,\vec{T}_{-j} \right) = \E \left[C_j\left(i, A\left( \vec{T}_j,\vec{T}_{-j} \right)\right)\right].
\end{align*}
The above equation can be rewritten as 
\begin{align}
\tag{IC}
\sum\limits_i \pi_{s_j}(i) \left( z_{ij}\left( \vec{T}_j,\vec{T}_{-j} \right) V_j(i) - c_{ij}\left( \vec{T}_j,\vec{T}_{-j} \right) \right)
\geq 
\sum\limits_i \pi_{s_j}(i) \left( z_{ij}\left( \vec{T}'_j,\vec{T}_{-j} \right) V_j(i) - c_{ij}\left( \vec{T}'_j,\vec{T}_{-j} \right) \right).
\end{align} 

Moreover, since the equilibrium $A$ respects voluntary participation, every bidder's equilibrium payoff must be non-negative. As a consequence, we have
\begin{align}
\tag{IR}
\sum\limits_i \pi_{s_j}(i) \left( z_{ij}\left( \vec{T}_j,\vec{T}_{-j} \right) V_j(i) - c_{ij}\left( \vec{T}'_j,\vec{T}_{-j} \right) \right)
\geq 0.
\end{align}
Finally, we note that the revenue of the seller is given by
\begin{align*}\label{eq: rev}
\tag{Rev}
R 
&=  \sum_{\{\vec{T}_j\}_{j \in [b]}} 
\Pb\left[ \{\vec{T}_j\}_{j \in [b]}\right] \sum_{j=1}^b \sum_i \pi_{s_j}(i) \cdot c_{ij}\left( \vec{V}_j,\vec{\pi}_{s_j} \right), 
\end{align*}
where $\Pb\left[ \{\vec{T}_j\}_{j \in [b]}\right]$ is the probability that for all $j$, bidder's $j$ realized type is $\vec{T}_j$.

A mechanism that satisfies constraints (IC) and (IR) and yields revenue $R$ can clearly be implemented as a direct revelation mechanism that is ex-post IC and interim IR, that works as follows: first, every agent $j$ reports his valuation vector $\vec{V}_j$ and posterior $\vec{\pi}_{s_j}$ to the seller, then the seller reveals the item type $i$, and allocates to bidder $j$ with probability $z_{ij}\left( \vec{T}_1 \ldots,\vec{T}_b \right)$ at price $c_{ij}\left( \vec{T}_1 \ldots,\vec{T}_b \right)$, where $\vec{T}_j = \left(\vec{V}_j,\vec{\pi}_{s_j}\right)$. Further, a revenue-maximizing such mechanism can be found by solving the linear program that has variables $z_{ij}\left( \vec{T}_1 \ldots,\vec{T}_b \right),~c_{ij}\left( \vec{T}_1 \ldots,\vec{T}_b \right)$ for all item type $i$, bidder $j$, and bidder type $T_j$, objective (Rev), and constraints (IC), (IR).
\end{proof}

\section{Proof of Theorem~\ref{THM:SIMPLE}}\label{app:simple}

\begin{proof}[Proof of Lemma~\ref{lem:item_simple}]
In item-type pricing, the seller announces the item type (hence, completely superseding the effect of the data provider's signal)
and then offers a price that is a function of the realized item type. The expected revenue of such a mechanism is simply given by 
\[
\frac{1}{n} \sum_{k=1}^m \frac{m}{k} = \Theta \left( \frac{\log n}{\sqrt{n}} \right), 
\]
as the expected revenue from selling an item of type $i$ in the $k$th group is $P \cdot \Pr \left[V(i) \geq P \right] = P \cdot \frac{1}{kP}= \frac{1}{k}$, as $k \cdot V(i)$ follows an ER distribution. 
\end{proof}

\begin{proof}[Proof of Lemma~\ref{lem:bundling_simple_noDP}]
This proof follows the same structure as the proof of Proposition 25 of \citet{HN12}. For all $i$ and all $M \geq 1$, we let $V^M(i) = \min \left(V(i),M \right)$. By~\citet{HN12}, $V^M(i)$ has mean $\log M + 1$ and variance upper-bounded by $2M$. In particular, it follows that the expectation and variance of the value of the bundle (renormalized by $n$), were the bidders' valuations truncated at $M$, satisfy
\begin{align*}
&\E \left[ \sum_{k = 1}^m \sum_{i \in I_k} \frac{V^M(i)}{k}\right] 
= \left( \log M + 1 \right) \sum_{k=1}^m \frac{m}{k} 
\\&\in \left[ \frac{1}{2} \left(\log M + 1 \right) \sqrt{n} \log n; \left(\log M + 1 \right) \sqrt{n} \left( 1 + \frac{1}{2} \log n \right) \right],
\end{align*}
and
\begin{align*}
\Var\left[ \sum_{k = 1}^m \sum_{i \in I_k} \frac{V^M(i)}{k}\right]  
\leq  2 M m \sum_k \frac{1}{k^2} \leq \frac{\pi^2}{3} M \sqrt{n}.
\end{align*}

We first give a lower bound on the revenue of the item-type bundling mechanism. 
\begin{align*}
&\Pr \left[ \frac{1}{n} \sum_{k = 1}^m \sum_{i \in I_k} \frac{V(i)}{k} \geq P \right]
\\&\geq \Pr \left[\sum_{k = 1}^m \sum_{i \in I_k} \frac{V^M(i)}{k} \geq n P \right]
\\&= \Pr \left[ \sum_{k = 1}^m \sum_{i \in I_k} \frac{\E \left[V^M(i)\right] - V^M(i)}{k} 
\leq m \left( \log M + 1 \right) \cdot \sum_{k=1}^m \frac{1}{k} - n P \right]
\\&= 1 
- \Pr \left[ \sum_{k = 1}^m \sum_{i \in I_k} \frac{\E \left[V^M(i)\right] - V^M(i)}{k}
>\left( \log M + 1 \right) \sum_{k=1}^m \frac{m}{k} - n P \right]
\\&\geq 1 - \frac{\pi^2 M \sqrt{n}}{3\left(\left( \log M + 1 \right)\sum_{k=1}^m \frac{m}{k} - nP\right)^2}
\\& \geq 1 - \frac{\pi^2 M \sqrt{n}}{3\left( \frac{1}{2} \left(\log M + 1 \right) \sqrt{n} \log(n) - nP\right)^2} 
\end{align*}
where the second-to-last step follows from Chebyshev's inequality, when $m \left( \log M + 1 \right) \cdot \sum_{k=1}^m \frac{1}{k} - n P  \geq 0$. Let $M = \sqrt{n} \log^2 n$ and $P = \frac{\log^2 n}{4\sqrt{n}}$, we obtain that 
\begin{align*}
&\Pr \left[ \frac{1}{n} \sum_{k = 1}^m \sum_{i \in I_k} \frac{V(i)}{k} \geq P \right]
\\&\geq 1 - \frac{\pi^2 n \log^2 n}{3\left( \frac{1}{2} \left(\frac{1}{2}\log n  + 2\log \log n+ 1 \right) \sqrt{n} \log n - \frac{\sqrt{n} \log^2 n}{4} \right)^2}
\\& \geq 1 - \frac{\pi^2 n \log^2 n}{3 \left( \sqrt{n} \log n \cdot \log \log n\right)^2}
\\&\geq 1 - \frac{\pi^2}{3 \left( \log \log n \right)^2}
\end{align*} 
Therefore, a buyer buys a bundle with price $P = \frac{\log^2 n}{4\sqrt{n}}$ with constant probability (for $n$ large enough), guaranteeing a revenue of $\Omega \left( \frac{\log^2 n}{\sqrt{n}}\right)$. 

For the upper bound, we first remark that $P \leq 2  \frac{\log n }{\sqrt{n}} \left(1 + \frac{\log n}{2} \right)$ implies that the revenue is at most $O \left( \frac{\log^2 n}{\sqrt{n}} \right)$. We therefore assume w.l.o.g that $P > 2  \frac{\log n }{\sqrt{n}} \left(1 + \frac{\log n}{2} \right)$. The revenue from grand bundling at price $P$ satisfies, by union bound:
\begin{align*}
P \cdot \Pr \left[ \frac{1}{n} \sum_{k = 1}^m \sum_{i \in I_k} \frac{V(i)}{k} \geq P \right] 
&\leq P \cdot \Pr \left[ \sum_{k = 1}^m \sum_{i \in I_k} \frac{V^{n P}(i)}{k} \geq n P \right] 
\\& + P \cdot \Pr \left[\exists k, i \in I_k:~\frac{V(i)}{k} \geq n P  \right]
\end{align*}

By union bound, we have on the one hand that 
\begin{align*}
P \cdot \Pr \left[\exists k, i \in I_k:~\frac{V(i)}{k} \geq n P  \right] 
&\leq P \sum_k \sum_{i \in I_k} \Pr \left[ V(i) \geq k nP\right] 
\\&= m P \sum_k \frac{1}{k n P} 
\\&= O \left( \frac{\log n}{\sqrt{n}}\right).
\end{align*}
On the other hand, remembering that 
\begin{align*}
&\E \left[ \sum_{k = 1}^m \sum_{i \in I_k} \frac{V^{n P}(i)}{k}  \right] \leq (\log(n P) + 1) \sqrt{n} \left(1 + \frac{\log n}{2} \right)
\\&\Var \left[ \sum_{k = 1}^m \sum_{i \in I_k} \frac{V^{n P}(i)}{k}  \right] \leq \frac{\pi^2}{3} n \sqrt{n} P,
\end{align*}
we have by Chebyshev that 
\begin{align*}
&P \cdot \Pr \left[ \sum_{k = 1}^m \sum_{i \in I_k} \frac{V^{n P}(i)}{k} \geq n P \right] 
\\&\leq \frac{\pi^2 n \sqrt{n} P^2}{3 \left(n P - (\log n + \log P + 1) \sqrt{n} \left(1 + \frac{\log n}{2} \right)\right)^2}.
\end{align*}
Using the fact that w.l.o.g, $P \geq 2  \frac{\log n }{\sqrt{n}} \left(1 + \frac{\log n}{2} \right)$ or equivalently $\sqrt{n} \log n \left(1 + \frac{\log n}{2} \right) \leq nP/2$, and that $ \left(\log P + 1\right) \sqrt{n} \left(1 + \frac{\log n}{2} \right) = o \left(nP \right)$ we have that 
\begin{align*}
&n P - (\log n + \log P + 1) \sqrt{n} \left(1 + \frac{\log n}{2} \right) 
\\&= n P - \left(\log P + 1\right) \sqrt{n} \left(1 + \frac{\log n}{2} \right) - \sqrt{n} \log n \left(1 + \frac{\log n}{2} \right)
\\& \geq nP - o (nP) - \frac{nP}{2}
\\& = \Omega \left(nP\right),
\end{align*}
which leads to 
\begin{align*}
&P \cdot \Pr \left[ \sum_{k = 1}^m \sum_{i \in I_k} \frac{V^{n P}(i)}{k} \geq n P \right] = O \left( \frac{n \sqrt{n} P^2}{n^2 P^2}\right) = O \left( \frac{1}{\sqrt{n}}\right).
\end{align*} 
Hence, 
\[
P \cdot \Pr \left[\exists k \in [m], i \in I_k:~\frac{V(i)}{k} \geq n P  \right] = O \left( \frac{\log n}{\sqrt{n}}\right),
\]
which concludes the proof.
\end{proof}

\begin{proof}[Proof of Lemma~\ref{lem:bundling_simple}]
Let $P^*$ be the optimal bundling price, and suppose the data provider announces signal $s_k$. There are two cases:
\begin{enumerate}
\item For $k$ such that $P^* \geq \frac{6}{k} \log m$, by Lemma~\ref{lem: ER_property}, the expected revenue is
\begin{align*}
P^* \cdot \Pb \left[\frac{1}{m} \sum_{i \in |I_k|}  \frac{V(i)}{k} \geq P^* \right] 
&= P^* \cdot \Pb \left[\frac{1}{m} \sum_{i \in |I_k|}  V(i) \geq k P^* \right] 
\\&\leq P^* \cdot \frac{9}{k P^*} 
\\&= \frac{9}{k},
\end{align*}
as 
$|I_k| = m$.
\item Otherwise, we have $k$ such that $P^* \leq \frac{6}{k} \log m$.
\end{enumerate}
Letting $k^* = \min \{k:~P^{*} > \frac{6}{k} \log m\}$, we see that the expected revenue of charging price $P^*$ for the grand bundle is upper-bounded by
\begin{align*}
\frac{1}{m} \left( \sum_{k \geq k^*} \frac{9}{k} + \sum_{k < k^*} P^* \right) 
& \leq \frac{ 9 \cdot (1 + \log m) + \sum_{k < k^*} \frac{6}{k^* - 1} \log m}{\sqrt{n}}
\\&= \frac{ 9 \cdot (1 + \log m) + 6 \log m}{\sqrt{n}}
\\&= O \left(\frac{\log n}{\sqrt{n}} \right).\qedhere
\end{align*}
\end{proof}

\begin{proof}[Proof of Lemma~\ref{lem:rev_opt}]

Consider the following item-type partition mechanism: the seller first partitions the item types into $m$ groups in the same way as specified in Construction~\ref{ex:simple}. When the realized item type is in group $I_k$, she offers to sell the item to the bidder at price $P_k=\frac{\log m}{2k}$.

If the bidder receives signal $s_k$, then the price offered by the seller must be $ \frac{\log m}{2k}$, and the bidder knows the item type is from group $I_k$. By Lemma~\ref{lem: ER_property}, as $|I_k| = m$, we have:
\[
\Pb \left[ \frac{1}{m} \sum_{i \in I_k} \frac{V(i)}{k} \geq \frac{\log m}{2k} \right] \geq \frac{1}{2},
\]
and hence with probability at least $1/2$, conditional on $S=s_k$, he accepts the price, yielding expected revenue to the seller of at least $\frac{\log m}{4k}$. The total expected revenue for the seller is then given by
\[
\frac{1}{m} \sum_{k=1}^m \frac{\log m}{4k} = \frac{\log n}{8 \sqrt{n}} \sum_{k=1}^m \frac{1}{k} = \Omega\left( \frac{\log^2 n}{\sqrt{n}} \right).
\]

No truthful mechanism can achieve revenue higher than $\log m$ times the revenue of item-type pricing conditioned on receiving signal $s_k$: Theorem 2 of~\citet{LiY13} shows that in traditional multi-item auctions, selling separately achieves at least a $\Omega\left(\frac{1}{\log m}\right)$ fraction of the optimal revenue for selling $m$ independent items; this result carries over to single-item, multi-type auctions by the reduction of~\citet{DPT16}. Thus, the optimal revenue is at most $O \left( \frac{\log^2 n}{\sqrt{n}} \right)$, and hence the item-type partition mechanism we just described yields a constant approximation to the optimal revenue.
\end{proof}

\section{Proof of Theorem~\ref{THM:ITEM-TYPE-PARTITION}} \label{app:item-type-partition}

\begin{proof}[Proof of Lemma~\ref{lemma:partition}]

Suppose the item-type partition mechanism splits the item types into non-empty groups $\cG_1$ to $\cG_{g}$, where $g \leq 2^m$ is the number of such groups. Let us assume that the item type $i$ lies in $\cG_r$, then the seller offers to sell the item at price at $P_r$. Suppose the signal is $s_{k,j}$ for some $j\in[n_k]$ with $i\in  \cI_{k,j}$. In the bidder's posterior, the item type is uniform over $\cG_{r} \cap \cI_{k,j}$. Note that $\left |\cG_{r} \cap \cI_{k,j} \right| \leq 2^k$. By Lemma~\ref{lem: ER_property}, we have
\[
P_{r} \cdot \Pb \left[\frac{1}{\left|\cG_{r} \cap \cI_{k,j}\right|} \sum_{t\in \cG_{r} \cap \cI_{k,j}} V(t) \geq P_{r} \right] \leq\begin{cases}
9 & \text{ if }  P_{r}  \geq 6k \log 2,\\
P_{r}  & \text{ if } P_{r} < 6k \log 2,\\
\end{cases}
\] 
following from $6 \log \left( 2^{k} \right) \geq 6 \log \left|\cG_{r} \cap \cI_{k,j}\right| $.

Let $k^*(r) = \max \{k:~P_{r} \geq 6k \log 2\}$. Further, let us denote by $ \Pb[k]$ the probability that the data provider selects a partition of size $2^k$. When item type $i\in \cG_r$, the revenue in expectation over the randomness of the signal is upper-bounded by
\begin{align*}
&9 \sum_{k \leq k^*(r)}  \Pb [k] + P_{r} \cdot \left(\sum_{k = k^*(r)+1}^{m} \Pb [k]\right)  
\\&= 9\sum_{k \leq k^*(r)} \frac{1}{k(k+1)} +  P_{r}\cdot \left(\sum_{k = k^*(r)+1}^{m-1} \frac{1}{k(k+1)} + \frac{1}{m} \right) \\
& \leq 9 \left(1 - \frac{1}{k^*(r)+1} \right) 
\\&+ 6\log 2 \cdot (k^*(r)+1)  \left(\frac{1}{k^*(r)+1}-\frac{1}{m} + \frac{1}{m} \right)
\\&\leq 9 + 6 \log 2, 
\end{align*}
where the first step follows from the fact that the probability of the data provider selecting a $k \leq m-1$ is $\frac{1}{k(k+1)}$, and the probability of him drawing $k=m$ is $\frac{1}{m}$. Since the upper bound holds for all possible prices, the expected revenue of any item-type partition mechanism is also upper-bounded by $9 + 6 \log 2$.
\end{proof}

\begin{proof}[Proof of Lemma~\ref{clm: inter_ex2}]
The bidder's expected utility for option $L_{k,j}$ when receiving signal $s_{k,j}$ is given by
\[
U_{k,j} = \frac{1}{2^{k}} \sum_{i \in I_{k,j}} V(i) - \frac{1}{8} \log 2^{k},
\]
his expected utility for selecting  option $L_{\kappa,\iota}$ for $\kappa > k$ is only less (his expected value for the item type is not more, but the price is higher),
and his utility for selecting option $L_{\kappa,\iota}$ for $\kappa < k$ is 
\[
U_{\kappa,\iota} = \frac{1}{2^{k}} \sum_{i \in I_{\kappa,\iota} \cap I_{k,j}} V(i) - \frac{1}{8} \log 2^{\kappa},\]
and
his expected utility for selecting any option $L_{\kappa,\iota}$ such that $I_{\kappa,\iota} \cap I_{k,j}=\emptyset$ is negative, since he will pay but never be allocated the item.

Therefore, the bidder prefers $L_{\kappa,\iota}$ to $L_{k,j}$ with $\kappa \leq k$ and $I_{\kappa,\iota} \subset I_{k,j}$ only if
\[
 \frac{1}{2^k} \sum_{i \in I_{k,j}\setminus I_{\kappa,\iota}} V(i) \leq \frac{1}{8} \log 2^{k-\kappa}.
\]

We want to upper bound the probability of the above event for all $\kappa < k$ and $I_{\kappa, \iota} \subset I_{k,j}$. Let us denote $V^M(i) = \min(V(i),M)$ for any $M$. We have immediately that $\E \left[V^M(i) \right] = \log M + 1$ and that its variance is upper-bounded by $2M$. Taking $M = 2^{k-1}$ and $W(i) = \log M+ 1 - V^M(i)$ yields $|W(i)| \leq M$, $\E \left[ W(i) \right] = 0$ and $\E \left[ W(i)^2 \right] \leq 2 \cdot 2^{k-1} = 2^{k}$. Recall Bernstein's inequality:

\begin{lemma}\label{lem:bernstein}
(Bernstein's Inequality): Suppose $X_1,...,X_n$ are independent random variables with zero mean, and $|X_i|\leq B$ almost surely for all $i$. Then for any $t>0$,
\[\Pr\left[\sum_{i=1}X_i>t\right]\leq exp\left(-\frac{\frac{1}{2}t^2}{\sum_{i=1}^nE[X_i^2]+\frac{1}{3}Bt}\right)\]
\end{lemma}

We can then apply Bernstein's inequality to show that
\begin{align*}\label{eq: Bernstein}
&\Pb \left[\frac{1}{2^k} \sum_{i \in I_{k,j}\setminus I_{\kappa,\iota}} V(i) 
      < \frac{1}{2^k} \cdot \left(\sum_{i \in I_{k,j}\setminus I_{\kappa,\iota}} \left( \log M + 1 \right) - t\right)  \right]\\
& =\Pb \left[ \sum_{i \in I_{k,j}\setminus I_{\kappa,\iota}} V(i) 
      < \sum_{i \in I_{k,j}\setminus I_{\kappa,\iota}} \left( \log M + 1 \right) - t  \right]\\
& \leq \Pb \left[ \sum_{i \in I_{k,j}\setminus I_{\kappa,\iota}} V^M(i) 
         < \sum_{i \in I_{k,j}\setminus I_{\kappa,\iota}} \left( \log M + 1 \right) - t  \right] \\
& = \Pb \left[ \sum_{i \in I_{k,j}\setminus I_{\kappa,\iota}} W(i) > t  \right] \nonumber\\
&\leq \exp \left(-\frac{1}{2} \cdot \frac{t^2}{2^{k} \cdot |I_{k,j} \setminus I_{\kappa,\iota} | + M \cdot t/3}\right) \\
& = \exp \left(-\frac{1}{2} \cdot \frac{t^2}{2^{k} \left( 2^k- 2^{\kappa} \right) + M \cdot t/3}\right), 
\end{align*}
where the last inequality is due to {Bernstein's inequality}. Taking
\begin{align*}
t= \left(\frac{3}{4}\right) \left( 2^k- 2^{\kappa} \right) \left( \log M + 1 \right),
\end{align*}
we have
\begin{align*}
\frac{1}{2^k}\left(\sum_{i \in I_{k,j}\setminus I_{\kappa,\iota}} \left( \log M + 1 \right) - t\right)  &=
\frac{1}{2^k}\cdot \frac{1}{4} \left(2^k-2^\kappa\right)\left(\log M+1\right) \\
&\geq \frac{1}{2^k} \cdot \frac{1}{4}\left(2^{k -1}\right)\left(\log M+1\right)\\
&=\frac{1}{8}\left(\log M+1\right),
\end{align*}
and we thus obtain a bound on the probability of the event that a particular menu option $L_{\kappa,\iota}$  for $\kappa < k$ is better for the bidder than option $L_{k,j}$, given signal $s_{k,j}$:
\begin{align*}
&\Pb\left[\frac{1}{2^k} \sum_{i \in I_{k,j}\setminus I_{\kappa,\iota}} V(i) \leq \frac{1}{8} \log 2^{k-\kappa} \right]
\\& <
\Pb \left[ \frac{1}{2^k}\sum_{i \in I_{k,j}\setminus I_{\kappa,\iota}} V(i) < \frac{1}{8} \left( \log 2^{k-1} + 1 \right) \right]\\
& \leq \Pb \left[\frac{1}{2^k} \sum_{i \in I_{k,j}\setminus I_{\kappa,\iota}} V(i) 
      < \frac{1}{2^k} \left(\sum_{i \in I_{k,j}\setminus I_{\kappa,\iota}} \left( \log 2^{k-1} + 1 \right) - t\right)  \right]\\
\\  & \leq \exp \left(-\frac{k^2}{2} \cdot \frac{(3/4)^2 \left( 2^k- 2^{\kappa} \right)^2 \left( \log 2 \right)^2 }{2^{k } \left( 2^k- 2^{\kappa} \right) + \frac{1}{4} 2^{k-1} \left( 2^k- 2^{\kappa} \right) \left( \log 2^{k-1} + 1 \right) }\right) 
\\  & \leq  \exp \left(- \frac{(k-1)^2 (3/4)^2 \cdot 2^{k-1} \left( 2^k- 2^{\kappa} \right) \left( \log 2 \right)^2 }{2^{k+1} \left( 2^k- 2^{\kappa} \right) + \frac{1}{4} 2^{k} \left( 2^k- 2^{\kappa} \right) \left( \log 2^{k-1} + 1 \right) }\right) 
\\& \leq \exp \left(-  \frac{(k-1) (3/4)^2 \left(\log 2 \right)^2 }{\frac{4}{k-1} + \frac{1}{2}  \left( \log 2 + \frac{1}{k-1} \right)}\right).
\end{align*}
For $k \geq 2 \cdot 10^2 + 1$, the above yields
\begin{align*}
&\Pb \left[ \frac{1}{2^k}\sum_{i \in I_{k,j}\setminus I_{\kappa,\iota}} V(i) < \frac{1}{8} \left( \log 2^{k-1} + 1 \right) \right]
\\&\leq \exp \left(- (k-1) \cdot \frac{(3/4)^2 \left(\log 2 \right)^2 }{\frac{4}{2 \cdot 10^2} + \frac{1}{2}  \left( \log 2 + \frac{1}{2 \cdot 10^2} \right)}\right).  
\end{align*}

We now let 
\[
K = \exp \left( \frac{(3/4)^2 \left(\log 2 \right)^2 }{\frac{4}{2 \cdot 10^2} + \frac{1}{2}  \left( \log 2 + \frac{1}{2 \cdot 10^2} \right)} \right),
\] 
and note that we then have that for $k \geq 2 \cdot 10^2 + 1$, 
\[ 
\Pb \left[ \frac{1}{2^k}\sum_{i \in I_{k,j}\setminus I_{\kappa,\iota}} V(i) < \frac{1}{8} \left( \log 2^{k-1} + 1 \right) \right] \leq \left( \frac{1}{K}\right)^{k-1}.
\]

Since there are less than $2^{k}$ groups $I_{\kappa,\iota}$ such that $I_{\kappa,\iota}\subset I_{k,j}$,  a union bound gives us that the probability that the bidder prefers a different option other than $L_{k,j}$ is upper bounded by $2 \cdot \left(\frac{2}{K}\right)^{k-1}$. A direct calculation shows that $2 \cdot \left(\frac{2}{K}\right)^{k-1} \leq 10^{-3}$. 
\end{proof}

We are now ready to prove Lemma~\ref{lemma:loglog}.

\begin{proof}[Proof of Lemma~\ref{lemma:loglog}]
The proof of Lemma~\ref{clm: inter_ex2} directly implies that the revenue of the considered mechanism is lower-bounded by
\begin{align*}
\frac{(1-10^{-3}) \log 2}{8} \left( \sum_{k \geq 2 \cdot 10^2 + 1}^{m-1} \frac{k}{k(k+1)} + \frac{m}{m} \right) 
&= \Omega \left( \log m \right)
\\&= \Omega \left( \log \log n \right),
\end{align*}
as a bidder who receives signal $s_{k,j}$ picks option $L_{k,j}$ with price $\frac{\log 2}{8} k$ with probability at least $1-10^{-3}$. 

The revenue of the best mechanism is upper-bounded by the optimal revenue the seller could obtain if she knew the realization of the signal. When facing signal $s_{k,j}$, the bidder's posterior is that the item type is taken uniformly at random from group $I_{k,j}$. By~\citet{BILW14,DPT16}, the better of item-type pricing and item-type bundling (conditioning now on the realization of the signal) yields a constant approximation to the optimal revenue. The revenue from item-type pricing is clearly $1$, and the revenue from item-type bundling is $O \left( \log 2^k \right)$ by Lemma~\ref{lem: ER_property} as setting $P > 6 \log 2^k$ yields constant revenue while setting $P \leq 6 \log 2^k$ yields $O \left( \log 2^k \right)$. Therefore, the optimal revenue conditional on the signal being $s_{k,j}$ must be $O \left( \log 2^k \right) = O \left( k \right)$, and the optimal (unconditional) revenue is therefore 
\[
 O \left( \sum_{k=1}^{m-1} \frac{k}{k(k+1)} + \frac{m}{m} \right) = O \left( \log m \right) = O \left( \log \log n \right).\qedhere
\]
\end{proof}

\section{Proof of Lemma~\ref{LEM:FULL_REVELATION_STRATEGIC}}~\label{app:strategic}
To compare $U(\vec V, S^*)$ and $U(\vec V, S)$, in the following we slightly abuse notation and let $\vec \pi_X$ denote the posterior distribution the buyer forms over $\itemtype$ when the signaling scheme is~$S^*$ and the buyer receives signal~$X$. 
	
Consider a lottery of the form given in Section~\ref{sec:charac} and discussed in Appendix~\ref{appx:charac}. Suppose the lottery has $l+1$ options, denoted by $L_0$, $L_1$ to $L_l$, where $L_0$ is a dummy option with price $0$ and allocation $\vec{Z_{\itemtype}}(0) = \vec{0}$ added to guarantee IR (as in Appendix~\ref{appx:charac}). Further, let $z_{\itemtype}(k)$ denote the probability with which $L_k$ allocates item of type~$i$, and $c_{\itemtype}(k)$ the price at which $L_k$ sells item of type~$\itemtype$. The expected utility of the bidder when he has valuation $\vec V$ and signal $S(X)$ is given by
\[
U(\vec V, S) = \Ex[X] {\Ex[\sig \sim S(X)] {\max_k \sum_{\itemtype} \pi_{\sig}(\itemtype) \left(V(\itemtype) z_{\itemtype}(k) - c_{\itemtype}(k) \right) }}
\]
On the other hand, if the data provider fully reveals his information, the bidder possessing this information and with value $\vec V$ would have utility 
\begin{align*}
U(\vec V, S^*) = \Ex[X]{ \max_k \sum_{\itemtype} \pi_{X}(\itemtype) (V(\itemtype) z_{\itemtype}(k) - c_{\itemtype}(k) ) }.
\end{align*}

Since the bidder's posterior when observing the realization of $\sig$ is obtained via Bayes update, we have $\vec \pi_{\sig} = \Ex[\tilde X \given \sig]{\vec \pi_{\tilde X}}$, where on the right hand side the expectation is taken over $\tilde X$, the buyer's belief of the data provider's information, drawn from the conditional distribution given the received signal~$\sig$.  
Therefore, we can write
\begin{align*}
	&U(\vec V, S)
\\& = \Ex[X] {\Ex[\sig \sim S(X)] {\max_k \Ex[\tilde X \given \sig] {\sum_{\itemtype} \pi_{\tilde X}(\itemtype) (V(\itemtype) z_{\itemtype}(k) - c_{\itemtype}(k))}} } \\
& \leq \Ex[X] {\Ex[\sig \sim S(X)] {\Ex[\tilde X \given \sig] {\max_k \sum_{\itemtype} \pi_{\tilde X}(\itemtype) (V(\itemtype) z_{\itemtype}(k) - c_{\itemtype}(k))}} } \\
& = \Ex[\tilde{X}] {\max_k \sum_{\itemtype} \pi_{\tilde{X}}(\itemtype)(V(\itemtype) z_{\itemtype}(k) - c_{\itemtype}(k)) }\\
& = \Ex[X] {\max_k \sum_{\itemtype} \pi_{X}(\itemtype)(V(\itemtype) z_{\itemtype}(k) - c_{\itemtype}(k)) } \\
&= U(\vec V,S^*),
\end{align*}
where the inequality follows from Jensen's inequality. Conditional on the signal being $s$, the distributions of $\tilde{X}|s$ and $X|s$ are identical by definition of Bayes update, which in turn directly implies the distributions of $\tilde{X}$ and $X$ are identical, and the second-to-last equality holds. This concludes the proof. 
\end{document}